\documentclass[11pt,a4paper]{article}
\usepackage{geometry}
%

\usepackage{graphicx}
\usepackage{bm,array}
\usepackage{comment}
\usepackage{caption}
\usepackage{subcaption}
\usepackage{tablefootnote}
 \usepackage{makecell}
\usepackage{booktabs}
 \usepackage{multirow}
 \usepackage{float}

\usepackage[normalem]{ulem}

\usepackage[pdfpagemode={UseOutlines},bookmarks=true,bookmarksopen=true,
bookmarksopenlevel=0,bookmarksnumbered=true,hypertexnames=false,
colorlinks,linkcolor={blue},citecolor={blue},urlcolor={red},
pdfstartview={FitV},unicode,breaklinks=true]{hyperref}
\hypersetup{urlcolor=blue, colorlinks=true}

\usepackage{setspace}
\onehalfspacing
\setlength{\parindent}{0pt}
\setlength{\parskip}{2.0ex plus0.5ex minus0.2ex}
\usepackage{vmargin}
\setmarginsrb{ 1.0in}  
{ 0.6in}  
{ 1.0in}  
{ 0.8in}  
{  20pt}  
{0.25in}  
{9pt}  
{ 0.3in}  

\usepackage{authblk}

\usepackage{amsmath}
\usepackage{amsfonts}
\usepackage{amssymb}
\usepackage[round, sort,comma,authoryear]{natbib}

\usepackage{tikz}
\usetikzlibrary{arrows.meta,positioning,shapes,fit}

\newtheorem{theorem}{Theorem}
\newtheorem{assumption}{Assumption}

\newtheorem{lemma}{Lemma}

\newtheorem{proposition}{Proposition}
\newtheorem{remark}{Remark}

\newenvironment{proof}[1][Proof]{\noindent\textbf{#1.} }{\ \rule{0.5em}{0.5em}}

\usepackage{multirow}
\usepackage{hhline}
\usepackage{footnote}
\usepackage[ruled,vlined]{algorithm2e}
\usepackage{algorithmic}

\usepackage{tcolorbox}
\tcbuselibrary{breakable} 

\usepackage{pgfplots}
\pgfplotsset{compat=1.18}

\newcommand{\transpose}{{\mbox{\tiny T}}}

\newcommand{\cA}{{\mathcal{A}}}

\newcommand{\cS}{{\mathcal{S}}}

\newcommand{\cN}{{\mathcal{N}}}

\newcommand{\cT}{{\mathcal{T}}}

\newcommand{\bQ}{\textbf{Q}}

\newcommand{\bu}{\textbf{u}}

\newcommand{\bV}{\textbf{V}}

\newcommand{\br}{\textbf{r}}
 
\newcommand{\bbt}{\pmb{\beta}}

\newcommand{\bmu}{{\pmb{\mu}}}

\newcommand{\bbE}{\mathbb{E}}

\usepackage{mathtools}

\usepackage{mathtools}

\newif\ifnotes\notestrue
%

%

\def\mtien#1{{\color{magenta}{#1}}}

\def\htien#1{}

\usepackage{xcolor}

\begin{document}







\newcolumntype{C}{>{\centering\arraybackslash}p{4em}}

\title{\textbf{
Equilibrium-Constrained Estimation of Recursive Logit Choice Models}}
\author[1,*]{Hung Tran}
\author[1]{Tien Mai}
\author[2]{Minh Hoang Ha}
\affil[1]{\it\small
School of Computing and Information Systems, Singapore Management University}
\affil[2]{\it\small
SLSCM and CADA, Faculty of Data Science and Artificial Intelligence, College of Technology, National Economics University, Hanoi, Vietnam}
\affil[*]{\it\small Corresponding author, hh.tran.2024@phdcs.smu.edu.sg}

\maketitle

\begin{abstract}
The recursive logit (RL) model provides a flexible framework for modeling sequential decision-making in transportation and choice networks, with important applications in route choice analysis, multiple discrete choice problems, and activity-based travel demand modeling. Despite its versatility, estimation of the RL model typically relies on nested fixed-point (NFXP) algorithms that are computationally expensive and prone to numerical instability. We propose a new approach that reformulates the maximum likelihood estimation problem as an optimization problem with equilibrium constraints, where both the structural parameters and the value functions are treated as decision variables. We further show that this formulation can be equivalently transformed into a conic optimization problem with exponential cones, enabling efficient solution using modern conic solvers such as MOSEK. Experiments on synthetic and real-world datasets demonstrate that our convex reformulation achieves accuracy comparable to traditional methods while offering significant improvements in computational stability and efficiency, thereby providing a practical and scalable alternative for recursive logit model estimation.
\end{abstract}

{\bf Keywords:}  
{Recursive Logit, Dynamic Discrete Choice, Maximum Likelihood Estimation, Conic Optimization, Transportation Networks}



\section{Introduction}

The recursive logit (RL) model, introduced by \citet{FosgFrejKarl13}, has emerged as a powerful framework for modeling sequential decision-making processes on networks. 
In transportation research, the RL model has been widely applied to analyze route choice behavior, capturing how travelers make link-by-link decisions that ultimately determine complete paths. 
Beyond route choice, RL-type formulations have proven useful in activity-based travel demand modeling, where individuals plan sequences of activities throughout the day \citep{zimmermann2018mixedRL}, as well as in multiple discrete choice settings, where selections of composite alternatives can be represented as paths on suitably defined directed acyclic graphs \citep{tran2024network}. 
By decomposing complex combinatorial choice problems into recursive structures, the RL model provides both theoretical elegance and practical tractability, making it a cornerstone in the field of behavioral modeling and prediction.

Despite these advantages, the practical application of RL models critically depends on estimation methods. 
Existing approaches overwhelmingly rely on the nested fixed-point (NFXP) algorithm \citep{Rust87,Rust94}, in which model parameters are estimated by repeatedly solving a dynamic program to evaluate the likelihood function. 
In practice, this requires computing the value function as the fixed point of a Bellman operator through iterative methods such as value iteration. 
However, this procedure is computationally demanding and often numerically unstable. 
When the Bellman recursion is ill-conditioned, value iteration can fail to converge, diverging instead to infinity or producing undefined values. 
As a result, the computed log-likelihood may become invalid, leading to failed estimation runs or unreliable parameter estimates~\citep{FosgFrejKarl13,MaiFrejinger2022}. These limitations represent a serious bottleneck for large-scale applications, where estimation speed and numerical robustness are critical.

To address these challenges, we propose a novel reformulation of the RL maximum likelihood estimation problem. 
Our key idea is to represent the likelihood estimation task as an optimization problem with equilibrium constraints, in which both the structural parameters and the expected downstream utilities (i.e., value functions) are treated as joint decision variables. 
This formulation allows us to transform the problem into a convex conic program, specifically an exponential-cone optimization model, that can be solved efficiently by modern commercial solvers such as MOSEK \citep{mosek2023}. 
By avoiding repeated fixed-point computations and instead embedding equilibrium conditions directly into the optimization problem, our approach provides a robust and scalable alternative to classical NFXP.

\paragraph{Contributions.} 
This paper makes several methodological and practical contributions to the literature on RL model estimation:

\begin{enumerate}
    \item \textbf{A new equilibrium-constrained reformulation.}  
    We introduce, for the first time, an equilibrium-constrained reformulation of the recursive logit maximum likelihood estimation problem.  
    Unlike the classical NFXP approach, which treats the value functions as implicit fixed points that must be repeatedly computed, our formulation elevates them to optimization variables and imposes Bellman consistency conditions explicitly as constraints.  
    This reformulation establishes a direct optimization view of RL estimation, bridging dynamic discrete choice modeling and conic optimization.

    \item \textbf{Exact convexification via exponential-cone programming.}  
    We prove that the equilibrium-constrained problem can be equivalently reformulated as a convex program using exponential-cone constraints.  
    This result is highly non-trivial: it shows that a problem traditionally considered non-convex and difficult to solve admits an exact convex representation, ensuring global optimality.  
    By leveraging modern interior-point solvers (e.g., MOSEK), the estimation task can now be solved with polynomial-time complexity and numerical robustness, circumventing the instability of value iteration and the local-convergence issues of NFXP.

    \item \textbf{A scalable trimming procedure for large-scale networks.}  
    To address the curse of dimensionality in real-world applications, we design a principled network-trimming method based on flow dominance.  
    This procedure automatically prunes states and arcs that contribute negligibly to the likelihood, while rigorously preserving connectivity and statistical validity.  
    The trimming ensures that the convex program remains tractable without compromising estimation accuracy, thereby extending the applicability of RL estimation to large, dense networks.


    \item \textbf{Comprehensive empirical validation.}  
    Through extensive experiments on synthetic benchmarks and real-world datasets (including multiple discrete choice and route choice applications), we demonstrate that our approach consistently matches the predictive accuracy of NFXP while offering vastly superior robustness.  
    In particular, our method avoids the convergence failures and invalid likelihood evaluations that frequently plague NFXP, while often achieving order-of-magnitude speedups in moderate-sized problems.  
    These results establish convex conic optimization as a practical and reliable new paradigm for RL estimation.

\end{enumerate}

Overall, this work is the first to show that the estimation the RL model can be cast as an \emph{exact convex optimization problem}, thereby replacing decades-old reliance on unstable fixed-point iterations with a theoretically principled and computationally efficient alternative. 
This shift has both methodological significance—connecting dynamic discrete choice estimation with conic optimization—and practical impact, enabling stable large-scale estimation in domains where RL models have so far been under-utilized.

\paragraph{Paper Outline:} The remainder of this paper is organized as follows. 
Section~\ref{sec:review} reviews relevant literature on route choice modeling, recursive logit estimation methods, and related optimization-based approaches. 
Section~\ref{sec:RL} introduces the RL framework and revisits its estimation using the classical NFXP algorithm. 
In Section~\ref{sec:ECP}, we present the proposed equilibrium-constrained  reformulation and discuss its theoretical properties, including convexity and equivalence to the standard RL formulation. 
Section~\ref{sec:experiments} reports numerical experiments on both synthetic and real transportation networks, highlighting the computational efficiency and stability of the proposed approach compared to NFXP. 
Finally, Section~\ref{sec:concl} concludes the paper and discusses potential directions for future research. Appendix extends the framework to the Nested Recursive Logit model \citep{MaiFosFre15}. 

\section{Literature Review}\label{sec:review}
Route choice models play a central role in transportation research and travel behavior analysis, with applications ranging from network design and congestion management to demand forecasting and intelligent transportation systems~\citep{BenABier99,Trai03,Domencich1975}.
The literature on route choice modeling can broadly be categorized into two main approaches: path-based and link-based recursive formulations. Path-based models~\citep[see][for a review]{Prat09} rely on sampling feasible paths between each origin--destination  pair, which makes the resulting parameter estimates sensitive to the path sampling procedure. Even when correction terms are introduced to ensure consistent estimation, prediction remains computationally demanding, particularly in large-scale networks where the number of feasible paths grows exponentially.

In contrast, link-based RL models~\citep{Fosgerau2013,MaiFosFre15,mai2021RL_STD,oyama2017discounted} build upon the dynamic discrete choice  framework of~\citep{Rust87} and are mathematically equivalent to discrete choice models defined over the set of all feasible paths. These models offer several advantages: they can be consistently estimated without path enumeration, and they enable efficient prediction via dynamic programming. The first RL model was introduced by~\citet{Fosgerau2013} and has since been extended in multiple directions, including modeling correlated utilities across overlapping paths~\citep{MaiFosFre15,Mai_RNMEV,mai2018decomposition}, accounting for dynamic network conditions~\citep{de2020RL_dynamic}, incorporating stochastic time-dependent link costs~\citep{mai2021RL_STD},  capturing discounted behavior~\citep{oyama2017discounted}, or modeling multiple-discrete choice behaviors \citep{tran2024network}. 
RL models have been successfully applied in various transportation and network design contexts, including traffic management~\citep{BailComi08,Melo12}, network pricing~\citep{zimmermann2021strategic}, and network interdiction~\citep{mai2024stackelberg}. A comprehensive overview of these developments and related estimation techniques is provided in~\citet{zimmermann2020tutorial}.



The RL model belongs to the class of \emph{dynamic discrete choice}  models, where estimation is typically performed using the NFXP algorithm~\citep{Rust87}. In this framework, the inner loop solves the Bellman equation to obtain the value function, while the outer loop maximizes the likelihood with respect to model parameters. Despite its conceptual simplicity, several studies have noted that NFXP is highly sensitive to initialization, scaling, and network topology, which often lead to numerical instability or infeasible value function estimates in practice~\citep{Fosgerau2013,mai2018decomposition,MaiFrejinger2022}. 
 Several efforts have been made to improve the NFXP procedure for RL model estimation. For instance,~\citet{mai2018decomposition} proposed a decomposition-based method that can reduce estimation time by up to a factor of thirty. However, this approach cannot accommodate individual-specific attributes, which limits its applicability in general route choice settings. More recently,~\citet{MaiFrejinger2022} introduced a least-squares-based technique to accelerate the computation of the value function within the NFXP framework. Despite its computational efficiency, this method suffers from the same numerical instability as standard NFXP—specifically, the value function may fail to converge during the parameter search process, leading to incomplete or invalid likelihood evaluations.
These challenges motivate the search for alternative estimation frameworks that preserve the theoretical rigor of the RL model while improving numerical robustness.

Our work contributes to this line of research by connecting route choice estimation with the field of \emph{conic optimization}. Specifically, we propose an \emph{equilibrium-constrained optimization}  formulation that casts the RL estimation problem as a convex conic program. Compared to standard gradient-based optimization approaches, conic formulations offer several advantages, including guaranteed global optimality under convexity, well-established interior-point solution methods, and improved numerical stability in ill-conditioned problems~\citep{BoydVandenberghe2004,nesterov2006conic,mosek2023}. \textit{This approach bridges dynamic discrete choice modeling and modern convex optimization, providing a new computational perspective for stable and scalable estimation of RL models.}

It is worth emphasizing that our proposed equilibrium-constrained formulation framework is conceptually related to the Mathematical Programming with Equilibrium Constraints (MPEC) approaches developed for structural and dynamic discrete choice estimation~\citep{SuJudd2012,IskhakovEtAl2016,qeconomics2023}. Despite this connection, our method differs fundamentally in both formulation and scope.  MPEC methods jointly estimate model parameters and value functions by imposing the Bellman equation as an equilibrium constraint. However, these formulations typically preserve the nonlinear and nonconvex structure of the original problem, resulting in large-scale constrained nonlinear programs that are computationally demanding and often difficult to solve to global optimality. In contrast, our convex  formulation leverages the specific analytical structure of the RL model to derive a conic representation of the equilibrium conditions. This convexification guarantees global optimality, enhances numerical stability, and enables the use of efficient interior-point solvers such as MOSEK. 
Moreover, as a side note, prior studies have shown that MPEC approaches are generally less computationally efficient than traditional NFXP algorithms, particularly in large-scale settings where the number of equilibrium constraints increases rapidly with the state space~\citep{IskhakovEtAl2016}. In contrast, as shown extensively in our experiments, our proposed method achieves both numerical robustness and computational tractability, offering a practical and theoretically grounded alternative to the NFXP.

\section{The Recursive Logit Model}\label{sec:RL}

The \emph{recursive logit} (RL) model, introduced by \citet{FosgFrejKarl13}, is a dynamic discrete choice framework designed to represent sequential decision-making on a transportation network. Unlike classical multinomial logit formulations that treat entire paths as atomic alternatives, the RL model decomposes path choice into a sequence of link- or node-level decisions, thereby avoiding the combinatorial explosion of path enumeration.

Let $\cS$ denote the set of states, where each state $s \in \cS$ corresponds to a link or node in the network. For each state $s$, let $A(s) \subseteq \cS$ represent the set of directly reachable successor states. For a trip with destination $d$, the destination is modeled as an absorbing state, so the effective state space is $\widetilde{\cS} = \cS \cup \{d\}$. At each step, a traveler located in state $s$ chooses a successor state $s' \in A(s)$. The instantaneous random utility associated with this transition is given by
\begin{equation}
    u(s'|s) = v(s'|s) + \mu \, \epsilon(s'),
\end{equation}
where $v(s'|s)$ denotes the deterministic component of utility, typically expressed as a linear function of observable attributes (e.g., travel time, cost, or congestion) with parameters to be estimated, $\boldsymbol{\beta}$. Hence, we can write this component as $v(s'|s;~\boldsymbol{\beta})$. The term $\epsilon(s')$ represents an i.i.d.\ extreme value type~I disturbance, and $\mu > 0$ is a scale parameter.
At each state $s \in \cS$, the individual chooses the next state by maximizing the sum of two components: 
(i) the \emph{instantaneous random utility} of moving from $s$ to $s' \in A(s)$, and 
(ii) the \emph{expected maximum utility} of continuing from $s'$ to the destination $d$. 
Formally, the next-state decision is characterized by
\begin{equation}
    \max_{s' \in A(s)} \left\{ u(s'|s) + V(s') \right\},
\end{equation}
where $V(s')$ denotes the expected maximum utility of reaching $d$ starting from state $s'$. By the dynamic programming principle, the value function $V(s)$ can be expressed as
\begin{equation}
    V(s) =
    \begin{cases}
        0, & s = d, \\[0.5em]
        \bbE_{\epsilon}\!\left[ \max_{s' \in A(s)} \big\{ v(s'|s) + V(s') + \mu \, \epsilon(s') \big\} \right], & s \in \cS,
    \end{cases}
    \label{eq:dpvalue}
\end{equation}
where the expectation is taken with respect to the random shocks $\{\epsilon(s')\}_{s' \in A(s)}$.

In the RL model, the error terms $\epsilon(s')$ are assumed to be i.i.d.\ extreme value type I (Gumbel). Under this distributional assumption, the maximum operator in \eqref{eq:dpvalue} admits a closed-form representation, and the resulting conditional choice probabilities follow a multinomial logit structure. Consequently, the value function satisfies the \emph{log-sum-exp} recursion:
\begin{equation}
    V(s) =
    \begin{cases}
        0, & s = d, \\[0.5em]
        \mu \ln \left( \sum_{s' \in A(s)} \exp\!\left(\tfrac{1}{\mu} \left( v(s'|s) + V(s') \right) \right) \right), & s \neq d.
    \end{cases}
    \label{eq:valuefunction}
\end{equation}
Equation \eqref{eq:valuefunction} illustrates how the RL model inherits the tractability of the multinomial logit: the continuation value at each state is obtained as a smooth ``log-sum-exp'' operator of the utilities of feasible successor states. This recursive structure eliminates the need to enumerate paths explicitly and ensures that choice probabilities are consistent with random utility maximization.

Moreover, conditional on being at state $s$, the probability of choosing successor $s' \in A(s)$ is
\begin{equation}
    P(s'|s) = 
    \frac{\exp\!\big(\frac{1}{\mu}(v(s'|s) + V(s'))\big)}{\sum_{t \in A(s)} \exp\!\big(\frac{1}{\mu}(v(t|s) + V(t))\big)}.
    \label{eq:choiceprob}
\end{equation}
This expression illustrates the recursive nature of the model: the attractiveness of an action depends not only on its immediate attributes but also on the continuation value $V(s')$.

For model estimation, consider an observed path 
$\sigma = \{s_0, s_1, \ldots, s_T\},$
which represents a sequence of states from an origin $s_0$ to the destination $d$.  
The probability of observing this path under parameter vector $\bbt$ is the product of the conditional choice probabilities at each step:
\begin{equation}
    P(\sigma \mid \bbt) 
    = \prod_{t=0}^{T-1} P(s_{t+1} \mid s_t).
\end{equation}
Using the recursive logit structure, this probability can be expressed in closed form as
\begin{equation}
    P(\sigma \mid \bbt) 
    = \exp\!\left( \frac{1}{\mu}(v(\sigma) - V(s_0)) \right),
\end{equation}
where 
$v(\sigma) = \sum_{t=0}^{T-1} v(s_{t+1} \mid s_t)$
is the total deterministic utility accumulated along the path $\sigma$, and $V(s_0)$ is the value function at the origin state.
Given a dataset of $N$ observed paths $\{\sigma_n\}_{n=1}^N$, the log-likelihood function for parameter estimation is
\begin{equation}
    \mathcal{L}(\bbt) 
    = \sum_{n=1}^N \log P(\sigma_n \mid \bbt)
    = \sum_{n=1}^N \Big( \frac{1}{\mu}(v(\sigma_n) - V^n(s_0^n)) \Big),
    \label{eq:loglikelihood}
\end{equation}
where $s_0^n$ denotes the origin of the $n$-th observed path, and $V^n(s)$ represents the expected maximum utility of reaching the destination starting from state $s$, corresponding to observation $n$. The model parameters $\bbt$ can then be estimated by \emph{maximum likelihood estimation} (MLE), which consists of maximizing the log-likelihood function $\mathcal{L}(\bbt)$ with respect to $\bbt$.

\paragraph{Estimation via NFXP.}
A central difficulty in maximizing the log-likelihood function in  \eqref{eq:loglikelihood} lies in the fact that the value functions $\{V(s)\}$ are not given in closed form but are instead implicitly defined through the fixed-point system \eqref{eq:valuefunction}. For a given parameter vector $\bbt$, solving this system may not always succeed: depending on the parameter values, the fixed-point equations can be ill-conditioned, which may lead to numerical instability or even the absence of a valid solution. This lack of robustness makes the computation of $V(s)$ itself a non-trivial task.

The standard estimation strategy is the NFXP algorithm \citep{Rust87}. In this approach, the estimation problem is organized into two loops. In the outer loop, a candidate parameter vector $\bbt$ is proposed, and the corresponding likelihood $\mathcal{L}(\bbt)$ must be evaluated. In the inner loop, for the given $\bbt$, the system \eqref{eq:valuefunction} is solved to recover the value functions $\{V(s)\}$, typically through iterative methods such as value iteration. Once $V(s)$ is obtained, the outer loop updates $\bbt$ using a numerical optimization routine, for example quasi-Newton methods. 

While this nested structure guarantees statistical consistency, it comes with a significant computational burden. Each evaluation of the likelihood requires repeatedly solving a high-dimensional dynamic program defined over the entire network. When this process must be performed many times across successive updates of $\bbt$, the computational cost grows prohibitively large. This combination of potential instability in computing value functions and the repeated fixed-point evaluations forms the main challenge of applying the RL model in practice. These difficulties motivate the search for alternative formulations, such as the equilibrium-constrained optimization approach developed in the next section.

\section{MLE as Equilibrium-constrained Optimization}\label{sec:ECP}
We now reformulate the maximum likelihood estimation problem as an optimization problem with equilibrium constraints. The key idea is to treat both the model parameters $\bbt$ and the value functions $\{V^n(s)\}$ as decision variables to be jointly estimated. For notational convenience, we denote the value function $V^n(s)$ by $V^n_s$, moving the state index to a subscript. The destination $d$ also depends on the observed path $n$, but the path index is omitted for clarity. This representation allows us to handle $\bbt$ and $V$ more naturally as vectors of optimization variables in the reformulated problem.

We write the MLE as the following equilibrium-constrained optimization problem:
\begin{align}
    \max_{\bbt,\, \bV} \quad & \sum_{n \in [N]} \Big( v(\sigma_n \mid \bbt) - V^n_{s^n_0} \Big) 
    \label{prob:mle-ec}\tag{\sf MLE-EC-EQ}\\
    \text{s.t.} \quad & 
    V^n_s = \log\!\left(\sum_{s' \in A(s)} \exp\!\big(v({s'|s;~\bbt}) + V^n_{s'}\big)\right), 
    \quad \forall n \in [N],~ s \in \cS. \nonumber\\
    &V^n_d = 0,\quad \forall n \in [N],\nonumber
\end{align}
In this formulation, the equilibrium conditions defining the  value functions appear explicitly as equality constraints. The goal is therefore to simultaneously optimize over the structural parameters $\bbt$ and the value functions $\{V^n_s\}$. However, problem \eqref{prob:mle-ec} is highly impractical in its current form: the constraints are nonlinear equalities involving exponential terms, which makes the optimization problem non-convex and computationally challenging.  In the next section, we describe our approach to reformulate \eqref{prob:mle-ec} into a tractable \emph{convex} optimization problem. This convex reformulation allows the problem to be efficiently solved using modern commercial solvers.

\subsection{Inequality-Constrained Equilibrium Reformulation}
We now describe our approach to reformulate the non-convex equilibrium-constrained problem \eqref{prob:mle-ec} into a convex optimization problem. The central idea is to relax the nonlinear equality constraints into inequalities that preserve the equilibrium structure while ensuring tractability. To make this reformulation valid, we impose two mild assumptions.
\begin{assumption}[A1 - Network connectivity]\label{assump:A1}
The state network is fully connected with respect to the origins appearing in the observed paths. In other words, for any observation $n$ with origin state $s^n_0$, and for any state $s \in \cS$, there exists at least one feasible path that connects $s^n_0$ to $s$.
\end{assumption}
Assumption~\ref{assump:A1} guarantees that all states are reachable from the observed origins. This condition is natural in most transportation or choice network applications, where the network is typically strongly connected or can be restricted to its reachable component.
\begin{assumption}[A2 - Existence of a feasible solution]\label{assump:A2}
The maximum likelihood estimation problem \eqref{prob:mle-ec} admits at least one feasible solution $(\bbt, \bV)$.
\end{assumption}
Assumption~\ref{assump:A2} ensures that the estimation problem is well-posed, i.e., the optimization problem has a feasible solution that maximizes the likelihood. This is a standard regularity condition in maximum likelihood estimation: without it, the optimization could be ill-defined (for instance, if the likelihood is unbounded or the feasible region is empty). In practice, this assumption is usually satisfied under mild conditions on the utility specification and the observation data.

The following theorem establishes a key result that enables us to overcome the nonlinearity of \eqref{prob:mle-ec}: it shows that the equality constraints in \eqref{prob:mle-ec} can be safely relaxed to inequalities  without changing the optimal solution. This relaxation not only preserves the theoretical integrity of the model but also paves the way for a convex reformulation 
that can be efficiently solved using modern optimization techniques.

\begin{theorem}
\label{th:ineq-relax}
Under Assumptions~\ref{assump:A1}--\ref{assump:A2}, the equality-constrained problem \eqref{prob:mle-ec} is equivalent to the inequality-relaxed problem:
\begin{align}
    \max_{\bbt,\,\bV} \quad & \sum_{n \in [N]} \Big( v(\sigma_n \mid \bbt) - V^n_{s^n_0} \Big) 
    \tag{\sf MLE-EC-IEQ} \label{prob:mle-ec-ineq}\\
    \text{s.t.} \quad & V^n_s \;\geq\; 
    \log\!\left(\sum_{s' \in A(s)} \exp\!\big(v(s'|s;\bbt) + V^n_{s'} \big)\right), 
    \quad \forall n \in [N], ~ s \in \cS, \nonumber \\
    & V^n_d = 0,\quad  \forall n \in [N]. \nonumber
\end{align}
\end{theorem}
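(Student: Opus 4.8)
The plan is to prove that the two problems share the same optimal value and that optimizers correspond, by exploiting the monotonicity of the Bellman / log-sum-exp operator. For a fixed parameter vector $\bbt$, define the operator $\cT$ acting on a value vector $\bV=(V_s)_{s\in\cS}$ by $(\cT\bV)_s = \log\big(\sum_{s'\in A(s)}\exp(v(s'|s;\bbt)+V_{s'})\big)$, with the convention $V_d=0$. The equality constraints of \eqref{prob:mle-ec} say exactly that $\bV$ is a fixed point of $\cT$, whereas the relaxed constraints of \eqref{prob:mle-ec-ineq} say that $\bV$ is a \emph{supersolution}, $\bV\geq\cT\bV$ componentwise. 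Since both problems carry the identical objective $\sum_{n\in[N]}\big(v(\sigma_n\mid\bbt)-V^n_{s^n_0}\big)$, and since an equality is in particular an inequality, every feasible point of \eqref{prob:mle-ec} is feasible for \eqref{prob:mle-ec-ineq} with the same objective value; hence the optimal value of \eqref{prob:mle-ec-ineq} is at least that of \eqref{prob:mle-ec}. The entire content of the theorem is the reverse inequality.

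First I would establish two structural facts about $\cT$. (i) Monotonicity: if $\bW\leq\bW'$ componentwise then $\cT\bW\leq\cT\bW'$, which is immediate because $\exp$, summation, and $\log$ are all monotone. (ii) The fixed point is the smallest supersolution: given any supersolution $\bV\geq\cT\bV$, the sequence $\bV^{(0)}=\bV$, $\bV^{(k+1)}=\cT\bV^{(k)}$ is nonincreasing, since $\bV^{(1)}=\cT\bV^{(0)}\leq\bV^{(0)}$ and monotonicity propagates this downward by induction; once the sequence is shown to converge, its limit $\bar{\bV}$ satisfies $\bar{\bV}=\cT\bar{\bV}$ and $\bar{\bV}\leq\bV$.

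With these facts the reverse inequality follows. I would take any feasible (in particular, optimal) $(\bbt,\bV)$ of \eqref{prob:mle-ec-ineq} and, for each $n\in[N]$, replace $V^n$ by the fixed point $\bar V^n$ obtained from the supersolution $V^n$ as in (ii). Then $(\bbt,\{\bar V^n\})$ is feasible for \eqref{prob:mle-ec}, and because $\bar V^n\leq V^n$ componentwise implies $\bar V^n_{s^n_0}\leq V^n_{s^n_0}$, the new objective is no smaller. Hence the optimal value of \eqref{prob:mle-ec} is at least that of \eqref{prob:mle-ec-ineq}, and together with the easy direction the two optimal values coincide and optimizers correspond. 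Assumption~\ref{assump:A1} enters to guarantee that every state is reachable from an observed origin, so that each component of $\bV$ genuinely influences some origin value and the equivalence holds state-by-state with no free, disconnected variables; Assumption~\ref{assump:A2} guarantees the feasible set is nonempty so that an optimum exists and the construction above starts from a legitimate supersolution.

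The step I expect to be the main obstacle is the convergence claim inside fact (ii): showing that the downward iteration $\bV^{(k)}$ is bounded below and therefore converges to a genuine finite fixed point of $\cT$ rather than drifting to $-\infty$. The upper bound is free (it is the supersolution itself), so the crux is a uniform lower bound. I would obtain it by unrolling the elementary estimate $(\cT\bW)_s\geq\max_{s'\in A(s)}\big(v(s'|s;\bbt)+W_{s'}\big)$ along a feasible path from $s$ to the absorbing state $d$ (where $V_d=0$), which telescopes into a finite constant independent of $k$; monotone convergence then delivers the limit, and continuity of $\cT$ makes that limit a fixed point. An alternative, more elementary route to the reverse inequality is a complementary-slackness/perturbation argument: at an optimum of \eqref{prob:mle-ec-ineq}, if some constraint at a state $s$ were slack, Assumption~\ref{assump:A1} supplies a path from the origin $s^n_0$ to $s$ along which a small coordinated decrease of the $V^n$-values keeps every inequality satisfied while strictly lowering $V^n_{s^n_0}$, contradicting optimality and thereby forcing all constraints to hold with equality, i.e.\ recovering \eqref{prob:mle-ec}.
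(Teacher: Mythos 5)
Your proposal is correct in substance, but your main route is genuinely different from the paper's. The paper argues by contradiction at an optimum of \eqref{prob:mle-ec-ineq}: if some constraint has slack at a state $\bar s$, the \emph{strict} monotonicity of the log-sum-exp operator propagates that slack backwards to predecessors under repeated application of $\cT$, and Assumption~\ref{assump:A1} guarantees the strict decrease reaches the origin $s^n_0$ after finitely many iterations, yielding a feasible point with strictly larger objective --- a contradiction. This needs only finitely many applications of $\cT$ and no convergence analysis. Your main route instead constructs, from any supersolution, the least fixed point below it via the monotone downward iteration, and substitutes it into the objective; this is a clean and somewhat more constructive argument, and your ``alternative route'' in the last paragraph is essentially the paper's actual proof. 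Two caveats on your main route. First, your telescoping lower bound $(\cT\bW)_s \ge \max_{s'\in A(s)}\bigl(v(s'|s;\bbt)+W_{s'}\bigr)$ unrolled to $d$ requires that every state have a feasible path \emph{to} the destination; Assumption~\ref{assump:A1} only asserts reachability \emph{from} the origins, so this is an additional (implicit, but genuinely needed) hypothesis --- for a state trapped in a cycle with no exit to $d$ and negative arc utilities, a supersolution can exist while the iteration drifts to $-\infty$ and no fixed point exists. Second, your construction shows the optimal values coincide and that each relaxed optimizer yields an equality-feasible optimizer with the same $\bbt$, but it does not by itself show that the relaxed optimizer \emph{itself} satisfies the Bellman equalities at every state; for that stronger conclusion (which is what the paper proves) you need precisely the strict-propagation argument you sketch as the alternative. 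Neither caveat is fatal, but both should be made explicit if you keep the fixed-point construction as the primary argument.
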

For notational convenience, let us first define the Bellman operator $\mathcal{T}_{\boldsymbol{\beta}}$ acting on a vector of values $\boldsymbol{V}^n$ as
\[
(\mathcal{T}_{\boldsymbol{\beta}}[\boldsymbol{V}^n])_s =
\begin{cases}
\log\!\left(\displaystyle\sum_{s' \in A(s)} \exp\!\big(v(s' \mid s; \boldsymbol{\beta}) + V^n_{s'}\big)\right), & \text{if } s \neq d,\\[8pt]
0, & \text{if } s = d.
\end{cases}
\]
To prove the theorem, we first introduce the following lemma, which shows that the Bellman operator $\mathcal{T}_{\boldsymbol{\beta}}[\boldsymbol{V}^n]$ is monotonic in $\boldsymbol{V}^n$. This property is critical for relaxing the equality constraint in the theorem.
\begin{lemma}\label{lm:monotone}
The operator $\cT_{\bbt}$ is monotone: if $\bV^{(1)}\geq \bV^{(2)}$ component-wise, then $\cT_{\bbt}[\bV^{(1)}]\geq \cT_{\bbt}[\bV^{(2)}]$. Moreover, if $V^{(1)}_s>V^{(2)}_s$ for some $s$, then $\cT_{\bbt}[\bV^{(1)}]_{s'} > \cT_{\bbt}[\bV^{(2)}]_{s'}$ for every predecessor $s'\in D(s)$, where $D(s)=\{s'\in\cS : s\in A(s')\}$.
\end{lemma}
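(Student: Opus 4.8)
The plan is to derive both claims directly from the fact that the scalar functions $\exp(\cdot)$ and $\log(\cdot)$ are strictly increasing, applied coordinatewise to the log-sum-exp definition of $\cT_{\bbt}$. No contraction or fixed-point argument is needed: the result is a purely termwise comparison of the sums defining the operator. The case $s=d$ is immediate, since $(\cT_{\bbt}[\bV^{(1)}])_d = (\cT_{\bbt}[\bV^{(2)}])_d = 0$, so throughout I restrict attention to states $s \neq d$, where the log-sum-exp branch is in force.

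For the weak monotonicity claim, I would fix $s \neq d$ and assume $\bV^{(1)} \geq \bV^{(2)}$ componentwise. For every $s' \in A(s)$, monotonicity of $\exp$ gives $\exp(v(s'|s;\bbt)+V^{(1)}_{s'}) \geq \exp(v(s'|s;\bbt)+V^{(2)}_{s'})$, since the deterministic term $v(s'|s;\bbt)$ is common to both sides and $V^{(1)}_{s'} \geq V^{(2)}_{s'}$. Summing these inequalities over $s' \in A(s)$ preserves their direction, and applying the increasing function $\log$ to both sides yields $(\cT_{\bbt}[\bV^{(1)}])_s \geq (\cT_{\bbt}[\bV^{(2)}])_s$. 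As $s$ was arbitrary, this proves $\cT_{\bbt}[\bV^{(1)}] \geq \cT_{\bbt}[\bV^{(2)}]$.

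For the strict claim, I would suppose in addition that $V^{(1)}_s > V^{(2)}_s$ for some fixed $s$, and let $s' \in D(s)$ be any predecessor, so that $s \in A(s')$ and in particular $s' \neq d$. The key observation is that the state $s$ appears as one of the summands in the definition of $(\cT_{\bbt}[\cdot])_{s'}$. For that single summand, strict monotonicity of $\exp$ gives $\exp(v(s|s';\bbt)+V^{(1)}_s) > \exp(v(s|s';\bbt)+V^{(2)}_s)$, while every remaining summand $t \in A(s')\setminus\{s\}$ still satisfies the weak inequality from the previous paragraph. A sum in which at least one term is strictly larger and all others are at least as large is itself strictly larger, so $\sum_{t \in A(s')}\exp(v(t|s';\bbt)+V^{(1)}_t) > \sum_{t \in A(s')}\exp(v(t|s';\bbt)+V^{(2)}_t)$; applying the strictly increasing $\log$ then gives $(\cT_{\bbt}[\bV^{(1)}])_{s'} > (\cT_{\bbt}[\bV^{(2)}])_{s'}$, as claimed.

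There is no genuinely hard step here: the argument is elementary once one notes that $\cT_{\bbt}$ acts separably on the exponentiated continuation values. The only point requiring mild care is the bookkeeping in the strict case — one must verify that the coordinate $s$ that strictly increases actually participates in the summation defining the predecessor $s'$, which is exactly the content of $s \in A(s')$, and that $s' \neq d$ so that the log-sum-exp branch (rather than the constant $0$ branch) applies. I would also make explicit that the strict claim is understood under the standing hypothesis $\bV^{(1)} \geq \bV^{(2)}$ of the first part, since this is precisely what lets the non-$s$ terms be controlled by weak inequalities; without it, a strict increase at a single coordinate could be offset by decreases elsewhere.
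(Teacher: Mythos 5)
Your proof is correct and follows essentially the same route as the paper's: a termwise comparison of the exponentiated summands using the (strict) monotonicity of $\exp$ and $\log$, with the strictly increased coordinate $s$ appearing as a summand in the log-sum-exp at each predecessor $s'\in D(s)$. The paper merely packages this via an auxiliary function $F_s(\mathbf{z})=\log\sum_a e^{z_a}$, and your added remarks (the $s=d$ branch, and that the strict claim is read under the standing hypothesis $\bV^{(1)}\geq\bV^{(2)}$) are consistent with its intent.
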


\begin{proof}[Proof of Lemma]
For each $s\in\cS$, define the function
\[
F_s(\mathbf{z}) \;:=\; \log\!\left(\sum_{a\in A(s)} \exp(z_a)\right),
\qquad \mathbf{z}=(z_a)_{a\in A(s)}\in\mathbb{R}^{A(s)}.
\]
It is standard that $F_s$ is (i) non-decreasing in each argument and (ii) \emph{strictly} increasing in any argument for which the input strictly increases, because the map $\mathbf{z}\mapsto \sum_{a} e^{z_a}$ is strictly increasing in each coordinate and $\log(\cdot)$ is strictly increasing on $(0,\infty)$.

To prove  the monotonicity, assume $\bV^{(1)}\ge \bV^{(2)}$ on $\cS$. Fix any $s\in\cS$ and set, for $i\in\{1,2\}$,
\[
\mathbf{z}^{(i)}_s \;:=\; \big(v(a|s;\bbt)+V^{(i)}_a\big)_{a\in A(s)}.
\]
Then $\mathbf{z}^{(1)}_s \ge \mathbf{z}^{(2)}_s$ component-wise, hence by the monotonicity of $F_s$,
\[
\big(\cT_{\bbt}[\bV^{(1)}]\big)_s
= F_s\!\big(\mathbf{z}^{(1)}_s\big)
\;\ge\;
F_s\!\big(\mathbf{z}^{(2)}_s\big)
= \big(\cT_{\bbt}[\bV^{(2)}]\big)_s.
\]
Since $s\in\cS$ was arbitrary, the inequality holds component-wise on $\cS$.

Moreover, let us suppose that there exists $s\in\cS$ with $V^{(1)}_s>V^{(2)}_s$. Let $s'\in D(s)$, i.e., $s\in A(s')$. For $i\in\{1,2\}$,  define
\[
\mathbf{z}^{(i)}_{s'} \;:=\; \big(v(a|s';\bbt)+V^{(i)}_a\big)_{a\in A(s')}.
\]
Then for the coordinate corresponding to $a=s$ we have
\[
v(s|s';\bbt)+V^{(1)}_s \;>\; v(s|s';\bbt)+V^{(2)}_s.
\]
Hence $\mathbf{z}^{(1)}_{s'}$ strictly dominates $\mathbf{z}^{(2)}_{s'}$ in at least that coordinate, and by the \emph{strict} monotonicity of $F_{s'}$ in each argument,
\[
\big(\cT_{\bbt}[\bV^{(1)}]\big)_{s'}
= F_{s'}\!\big(\mathbf{z}^{(1)}_{s'}\big)
\;>\;
F_{s'}\!\big(\mathbf{z}^{(2)}_{s'}\big)
= \big(\cT_{\bbt}[\bV^{(2)}]\big)_{s'}.
\]
This holds for every predecessor $s'\in D(s)$, which proves the claim.
\end{proof}

We now return to the proof of Theorem~\ref{th:ineq-relax}. 
The general idea is to show that the equality constraint in the original formulation can be relaxed to an inequality constraint without altering the optimal solution. 
This is achieved by leveraging the monotonicity of the Bellman operator $\mathcal{T}_{\boldsymbol{\beta}}$, established in Lemma~\ref{lm:monotone}, which ensures that iterative applications of $\mathcal{T}_{\boldsymbol{\beta}}$ preserve the order of value functions.  Consequently, any feasible solution satisfying the relaxed inequality will converge to the same fixed point as that of the equality-constrained formulation.

\begin{proof}[Proof of the Theorem]
The idea of the proof is to show that relaxing the equalities to inequalities does not enlarge the optimal solution set. Although the feasible region becomes bigger, any optimal solution of the relaxed problem must in fact satisfy the equalities. Intuitively, this is because any slack in the inequalities can be ``tightened'' to produce a strictly better objective value, contradicting optimality. For clarity of exposition, we structure the proof into the following steps.

\medskip
\noindent\textit{Step 1. Equality implies inequality:}   
It is immediate that if any solution $(\bbt,\bV)$ satisfies the equalities of \eqref{prob:mle-ec}, then it also satisfies the inequalities of \eqref{prob:mle-ec-ineq}. Thus, the feasible set of the equality-constrained problem is contained inside the feasible set of the inequality-relaxed problem. Therefore, the relaxed problem can never yield a smaller optimal value. The challenge is to show that it also cannot yield a strictly larger one.

\noindent\textit{Step 2. Contradiction assumption.}  
To formalize the recursive structure, the inequality constraints of \eqref{prob:mle-ec-ineq} can now be written compactly as
\[
\bV^n \;\geq\; \cT_{\bbt}[\bV^n], \quad \forall n \in [N].
\]
This notation allows us to leverage standard properties of $\cT_{\bbt}$ to verify the equivalence.  We need the following lemma to support the next steps:

We now suppose that $(\widehat\bbt,\widehat \bV)$ is an optimal solution to the inequality-relaxed problem \eqref{prob:mle-ec-ineq}. By feasibility, we have $\widehat \bV^n \geq \cT_{\widehat\bbt}[\widehat \bV^n]$ for each $n$. Assume, for contradiction, that there exists some observation $n$ and state $\bar s$ such that
\[
\widehat V^n_{\bar s} > (\cT_{\widehat\bbt}[\widehat \bV^n])_{\bar s}.
\]
This means that at $\bar s$ the inequality is strict.

\medskip
\noindent\textit{Step 3. Iterative tightening:}  
To exploit the above  slack, we iteratively apply the Bellman operator to $\widehat \bV^n$:
\[
\widehat \bV^{n,0} = \widehat \bV^n,\qquad 
\widehat \bV^{n,t+1} = \cT_{\widehat\bbt}[\widehat \bV^{n,t}], \quad t=0,1,\ldots
\]
Because $\widehat \bV^n \geq \cT_{\widehat\bbt}[\widehat \bV^n]$, we obtain a monotone sequence
\[
\widehat \bV^{n,0} \;\geq\; \widehat \bV^{n,1} \;\geq\; \widehat \bV^{n,2} \;\geq\;\cdots.
\]
Intuitively, each iteration ``pulls down'' the value function toward the equilibrium fixed point. Importantly, by Lemma~\ref{lm:monotone}, the strict slack at $\bar s$ propagates backwards to all its predecessors, ensuring that the inequality chain is strict at those states as well.

Now, under Assumption~\ref{assump:A1}, we have that  the observed origin $s^n_0$ can reach any state, including $\bar s$. Therefore, the strict inequality at $\bar s$ propagates along some path back to $s^n_0$. This implies that after finitely many iterations $T$, the sequence satisfies
\[
\widehat V^{n,T}_{s^n_0} > \widehat V^{n,T+1}_{s^n_0}.
\]
In words, the iterative tightening procedure eventually decreases the value at the origin.

\noindent\textit{Step 4. Objective improvement and contradiction.}  
Now consider a modified solution in which we replace $\widehat \bV^n$ by $\widehat \bV^{n,T+1}$ while keeping all other $\widehat \bV^{n'}$ fixed. This new solution is still feasible to \eqref{prob:mle-ec-ineq}, because each iterate satisfies $\bV^{n,t} \geq \cT_{\widehat\bbt}[\bV^{n,t}]$. However, the objective strictly improves: since the term $V^n_{s^n_0}$ in the objective has decreased, the value of 
\[
v(\sigma_n \mid \widehat\bbt) - V^n_{s^n_0}
\]
has increased. Thus the modified solution attains a strictly larger objective value than $(\widehat\bbt,\widehat \bV)$. 
This contradicts the assumption that $(\widehat\bbt,\widehat \bV)$ was optimal. Therefore, our assumption of strict slack must be false. It follows that at every optimal solution of \eqref{prob:mle-ec-ineq}, we must have $\widehat \bV^n = \cT_{\widehat\bbt}[\widehat \bV^n]$, i.e., the equalities hold. Hence every optimal solution of the relaxed problem is also feasible for the equality-constrained problem. Combined with Step 1, this proves equivalence.
\end{proof}

To further elucidate the implications of Theorem~\ref{th:ineq-relax}, 
we now provide a series of remarks that analyze specific scenarios in which the equivalence between the equality- and inequality-constrained formulations 
either holds, breaks down, or requires additional structural conditions. 
These discussions aim to underscore the critical role of Assumption~\ref{assump:A1} (network conectivity) in ensuring the validity of the relaxation 
and illustrate the pathological cases that may arise when this assumption is violated, thereby deepening our understanding of the theoretical framework.

\begin{remark}[Necessity of Assumption~\ref{assump:A1}]
\begin{figure}[htb]
\centering
\begin{tikzpicture}[
    >=Latex,
    node distance=2.0cm and 2.5cm,
    every node/.style={font=\small},
    state/.style={circle,draw,minimum size=8mm,inner sep=0pt},
    dest/.style={double, double distance=1pt, circle, draw, minimum size=9mm, inner sep=0pt},
    origin/.style={state, fill=black!5},
    unreachable/.style={state, fill=red!5, draw=red!70!black},
    note/.style={rectangle, rounded corners, draw=black!40, fill=black!2, inner sep=3pt}
]

\node[origin]      (s0) {$s_0$};
\node[state, right=of s0] (s1) {$s_1$};
\node[dest, right=of s1] (d) {$d$};
\node[unreachable, below=of s1] (s2) {$s_2$};

\draw[->, thick] (s0) -- (s1);
\draw[->, thick] (s1) -- (d);
\draw[->, thick] (s2) -- (d);

\draw[red!60, dashed, shorten >=3pt, shorten <=3pt] (s0) .. controls +(0.4,-1.0) and +(-0.4,1.0) .. (s2);
\node[red!70!black, font=\scriptsize, below left=0mm and -1mm of s2] {unreachable from $s_0$};

\end{tikzpicture}
\caption{Counterexample showing the necessity of Assumption~\ref{assump:A1}}
\label{fig:assumptionA1-counterexample}
\end{figure}
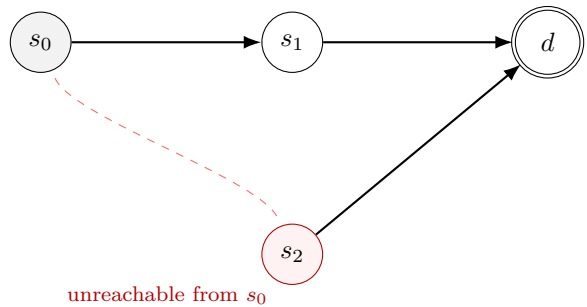
We present an example demonstrating that Assumption~\ref{assump:A1} is necessary to ensure the equivalence between the equality-constrained formulation \eqref{prob:mle-ec} and its inequality relaxation \eqref{prob:mle-ec-ineq}.
Consider a network with three non-destination states $\cS=\{s_0,s_1,s_2\}$ and destination $d$. 
Suppose the observed origin is $s_0$, and the arcs are $s_0 \to s_1 \to d$ and $s_2 \to d$. 
In this network, the path $s_0 \to s_1 \to d$ is valid, but state $s_2$ is not reachable from $s_0$. 

For the inequality-relaxed problem \eqref{prob:mle-ec-ineq}, the Bellman constraints are:
\begin{align*}
V_{s_0} &\;\geq\; \log\!\big( \exp(v(s_1|s_0)+V_{s_1}) \big), \\
V_{s_1} &\;\geq\; \log\!\big( \exp(v(d|s_1)+V_d) \big) = v(d|s_1), \\
V_{s_2} &\;\geq\; \log\!\big( \exp(v(d|s_2)+V_d) \big) = v(d|s_2), \\
V_d &= 0.
\end{align*}
Notice that $V_{s_2}$ never appears on the right-hand side of the recursions for $V_{s_0}$ or $V_{s_1}$, since $s_2$ is not reachable from the origin $s_0$. 
Hence, in the inequality-relaxed formulation, we may set $V_{s_2}$ arbitrarily large (above $v(d|s_2)$) without affecting feasibility or the value of $V_{s_0}$. 
This slack prevents the inequalities from automatically tightening to equalities at the optimum. 

By contrast, if Assumption~\ref{assump:A1} holds, then every state must be reachable from the observed origin $s_0$. 
In that case, the value of $V_{s_2}$ would eventually propagate back to $V_{s_0}$ through the Bellman recursions, and any slack would reduce the objective. 
Thus, under Assumption~\ref{assump:A1}, the inequalities bind at optimality, ensuring equivalence between the inequality and equality formulations.

The example above shows that if Assumption~\ref{assump:A1} does not hold, then solving the inequality-relaxed problem \eqref{prob:mle-ec-ineq} may produce a pair $(\bbt,\bV)$ that does not satisfy the Bellman equations, and therefore is not a valid solution to the original MLE problem. In other words, without this assumption, the relaxation can admit spurious optima that do not correspond to any well-defined recursive logit model. In practice, Assumption~\ref{assump:A1} can usually be enforced through a preprocessing step on the network. Specifically, if some origin states are not connected to parts of the network, one can add artificial arcs that link them to the rest of the graph. These additional arcs can be assigned very large travel times (or disutilities), so that they preserve network connectivity without affecting estimation in any meaningful way. This ensures that the optimization problem is well-posed while leaving the model’s behavioral interpretation intact.
\end{remark}

\begin{remark} [Cycle-free networks]
An important special case arises when the state network is \emph{cycle-free}, i.e., it forms a directed acyclic graph (DAG) from each origin to the destination. In such networks, both the equality-constrained formulation \eqref{prob:mle-ec} and its inequality relaxation \eqref{prob:mle-ec-ineq} are always feasible and admit a solution. 

The intuition is straightforward. Since the network has no directed cycles, the Bellman recursion can be evaluated in a finite number of steps by backward induction starting from the destination $d$, for which $V_d=0$. For each predecessor state, the value function is defined as the log-sum-exp of its outgoing transitions, which is always finite because the recursion terminates at $d$ after finitely many steps. Thus the equalities in \eqref{prob:mle-ec} define a well-posed system with a unique solution for $\{V_s\}_{s\in\cS}$. As a consequence, under Assumption~\ref{assump:A1}, the two formulations \eqref{prob:mle-ec} and \eqref{prob:mle-ec-ineq} are equivalent.  
\end{remark}
\begin{remark}[Case when both formulations are infeasible]
As discussed above, whenever the equality-constrained formulation \eqref{prob:mle-ec} admits a solution, the inequality-relaxed formulation \eqref{prob:mle-ec-ineq} also has at least one feasible solution (with the inequalities binding at optimality). A natural question is whether the converse might hold: if the equality-constrained problem is infeasible due to the presence of cycles or other structural issues, could the inequality relaxation still be feasible? 

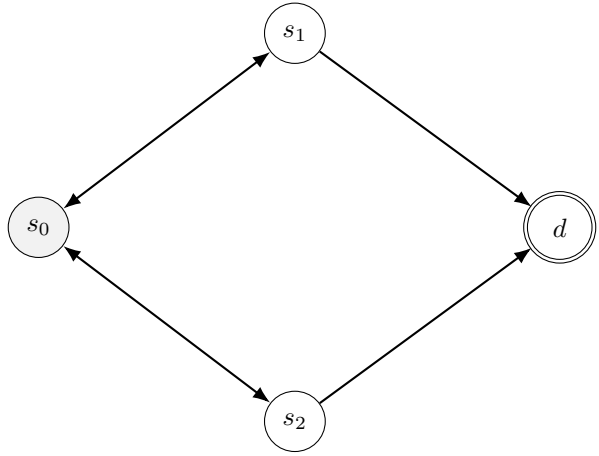
\begin{figure}[htb]
\centering
\begin{tikzpicture}[
    >=Latex,
    node distance=20mm and 28mm,
    every node/.style={font=\small},
    state/.style={circle,draw,minimum size=8mm,inner sep=0pt},
    origin/.style={state,fill=black!5},
    dest/.style={double, double distance=1pt, circle, draw, minimum size=9mm, inner sep=0pt}
]

\node[origin] (s0) {$s_0$};
\node[state, above right=of s0] (s1) {$s_1$};
\node[state, below right=of s0] (s2) {$s_2$};
\node[dest, right=60mm of s0] (d) {$d$};

\draw[<->, thick] (s0) -- (s1);
\draw[<->, thick] (s0) -- (s2);

\draw[->, thick] (s1) -- (d);
\draw[->, thick] (s2) -- (d);


\end{tikzpicture}
\caption{Topology with $s_0\!\leftrightarrow\!s_1$, $s_0\!\leftrightarrow\!s_2$, and terminal arcs $s_1\!\to\!d$, $s_2\!\to\!d$.}
\label{fig:s0s1s2d}
\end{figure}

The following example shows that the answer is negative in general: there exist network topologies and specifications where \emph{both} formulations are infeasible. 
Consider a simple network with three non-destination states $\{s_0,s_1,s_2\}$ and a destination $d$, with arcs
\[
s_0 \leftrightarrow s_1, \qquad s_0 \leftrightarrow s_2, \qquad s_1 \to d, \qquad s_2 \to d.
\]
This network contains two cycles, $s_0 \leftrightarrow s_1$ and $s_0 \leftrightarrow s_2$, while both $s_1$ and $s_2$ also connect directly to the destination. 
We fix $V_d=0$ and let $Z_s \stackrel{def}{=} e^{V_s}>0$. Define
\[
\begin{aligned}
&b_1 = e^{v(s_1|s_0)},\quad b_2 = e^{v(s_2|s_0)},\\
&a_{10}= e^{v(s_0|s_1)},\quad a_{20}= e^{v(s_0|s_2)},\\
&c_1 = e^{v(d|s_1)},\quad c_2 = e^{v(d|s_2)}.
\end{aligned}
\]
The Bellman \emph{equalities} $V=\log\sum\exp(\cdot)$ are then equivalent to the linear system
\[
\begin{aligned}
Z_{s_1} &= a_{10} Z_{s_0} + c_1, \\
Z_{s_2} &= a_{20} Z_{s_0} + c_2, \\
Z_{s_0} &= b_1 Z_{s_1} + b_2 Z_{s_2},
\end{aligned}
\]
which can be reduced to
$\bigl[1 - (b_1 a_{10} + b_2 a_{20})\bigr] \, Z_{s_0} \;=\; b_1 c_1 + b_2 c_2.$
Since $Z_{s_0}>0$ and $b_i,c_i>0$, a finite solution exists if and only if
\[
S \;\stackrel{def}{=}\; b_1 a_{10} + b_2 a_{20} \;<\; 1,
\]
in which case 
\[
Z_{s_0} \;=\; \frac{b_1 c_1 + b_2 c_2}{1-S}.
\]
For the inequality formulation \eqref{prob:mle-ec-ineq}, the same elimination gives
\[
\bigl[1 - S \bigr]\, Z_{s_0} \;\ge\; b_1 c_1 + b_2 c_2,
\]
which is feasible if and only if $S<1$ (by taking $Z_{s_0}$ sufficiently large). Thus, on this topology example, the equality and inequality problems in \eqref{prob:mle-ec} and \eqref{prob:mle-ec-ineq} are feasible together (when $S<1$) and infeasible together (when $S\ge 1$).

To illustrate a case where both formulations can be simultaneously infeasible—meaning that there exists no pair $(\bbt,\bV)$ feasible for \eqref{prob:mle-ec}, nor any pair $(\bbt,\bV)$ feasible for \eqref{prob:mle-ec-ineq}—we construct the following specification. Let $x_0 \in \mathbb{R}^p$ be any nonzero feature vector, and define the utilities linearly in $\bbt$ as
\[
\begin{aligned}
&v(s_1|s_0;\bbt)=\bbt^\top x_0, \qquad &&v(s_0|s_1;\bbt)=\bbt^\top x_0,\\
&v(s_2|s_0;\bbt)=-\,\bbt^\top x_0, \qquad &&v(s_0|s_2;\bbt)=-\,\bbt^\top x_0,\\
&v(d|s_1;\bbt)=c_1,\qquad &&v(d|s_2;\bbt)=c_2,
\end{aligned}
\]
for fixed constants $c_1,c_2\in\mathbb{R}$. In this case,
\[
b_1 a_{10} = e^{2\bbt^\top x_0}, 
\qquad b_2 a_{20} = e^{-2\bbt^\top x_0},
\]
so
\[
S = b_1 a_{10} + b_2 a_{20} 
= e^{2\bbt^\top x_0} + e^{-2\bbt^\top x_0}
 \;\ge 2 \;>\; 1,
\]
for all $\bbt\in\mathbb{R}^p$. 

Thus, under this specification, the condition $S<1$ cannot be satisfied. Hence \emph{both} the equality-constrained problem \eqref{prob:mle-ec} and the inequality-relaxed problem \eqref{prob:mle-ec-ineq} are infeasible for all parameter values $\bbt$. This illustrates that when cycles are present, neither formulation is guaranteed to have a solution, motivating the need for assumptions such as Assumption~\ref{assump:A2}.
\end{remark}



\subsection{Convex Reformulation}
We now present our approach to further reformulate the inequality-constrained formulation in  \eqref{prob:mle-ec-ineq} into a \textit{convex optimization problem} that can be solved efficiently by modern commercial solvers. The key step is to introduce auxiliary variables
\[
Q^n_{ss'} = v(s'|s;\bbt) + V^n_{s'}, \qquad \forall n\in[N], ~ s\in \cS, ~ s'\in A(s).
\]
These variables capture the one-step transition utilities, combining both the deterministic part of the utility and the continuation value. With this definition, the value function recursion can be written in the compact form
\[
V^n_s = \log\!\left(\sum_{s'\in A(s)} \exp(Q^n_{ss'})\right).
\]
This representation makes it possible to directly embed the equilibrium constraints into the optimization problem. Substituting the expression above into \eqref{prob:mle-ec-ineq} leads to the following formulation:
\begin{align}
    \max_{\bbt,\,\bQ} \quad & \sum_{n\in [N]} \left( v(\sigma_n \mid \bbt) - \log\!\Big(\sum_{s'\in A(s^n_0)} \exp(Q^n_{s^n_0 s'})\Big)\right) \label{prob:mle-ec-2}\tag{\sf MLE-EC-2}\\
    \text{s.t.} \quad & Q^n_{ss'} \;\geq\; v(s'|s;\bbt) + \log\!\left(\sum_{t\in A(s')} \exp(Q^n_{s't})\right), \quad \forall n\in[N], ~ s\in \cS, ~ s'\in A(s), \nonumber\\
    & Q^n_{sd} = v(d|s;\bbt), \quad \forall n\in[N],~ s\in\cS. \nonumber
\end{align}
It is important to emphasize that \eqref{prob:mle-ec-2} is a convex optimization problem. The concavity of the objective function follows from the fact that $-\log \sum \exp$ is concave, while $v(\sigma_n \mid \bbt)$ is linear in $\bbt$. On the other hand, the constraints consist of inequalities of the form
\[
Q^n_{ss'} \;\geq\; v(s'|s;\bbt) + \log\!\left(\sum_{t\in A(s')} \exp(Q^n_{s't})\right),
\]
whose right-hand side is convex in $(\bbt,\bQ)$. This guarantees that the feasible set is convex. Together, these two properties imply that the problem is convex and therefore admits a globally optimal solution. This convex formulation can in principle be solved by generic gradient-based convex optimization solvers such as projected gradient descent or quasi-Newton methods (e.g., L-BFGS). These methods guarantee convergence to the global optimum under mild conditions. However, they may suffer from slow convergence, particularly in large-scale networks where the log-sum-exp terms create ill-conditioned subproblems. Moreover, gradient-based approaches typically require careful step-size tuning and may be sensitive to scaling of the problem data. These limitations motivate the use of more specialized reformulations, in particular casting the problem as an exponential cone program, as discussed in the following.

\paragraph{Exponential Cone Reformulation}  
An additional advantage of the reformulation in \eqref{prob:mle-ec-2} is that the log-sum-exp terms can be represented exactly using exponential cone constraints. Recall that the exponential cone is defined as \citep{MOSEKcookbook,DahlAndersenExpCone2022}:
\[
\mathcal{K}_{\exp} = \Big\{(x,y,z)\in \mathbb{R}^3 : y>0, \; y \exp(x/y) \leq z\Big\} \;\cup\; \Big\{(x,0,z) : x \leq 0, \; z \geq 0\Big\}.
\]


The convex formulation in \eqref{prob:mle-ec-2} involves multiple log-sum-exp expressions, each of which can be equivalently modeled using exponential cone constraints. To make this explicit, we introduce additional epigraph variables
\begin{equation}\label{eq:un-s-def}
    u^n_s = \log\!\left(\sum_{t\in A(s)} \exp(Q^n_{st})\right), \qquad \forall n\in[N], ~ s\in\cS,
\end{equation}
which capture the log-sum-exp values at each state. With these variables, problem \eqref{prob:mle-ec-2} can be equivalently written as
\begin{align}
    \max_{\bbt,\,\bQ,\,\bu} \quad & \sum_{n\in [N]} \left( v(\sigma_n \mid \bbt) - u^n_{s^n_0}\right) \label{prob:mle-ec-3}\\
    \text{s.t.} \quad 
    & Q^n_{ss'} \;\geq\; v(s'|s;\bbt) + u^n_{s'}, \quad \forall n\in[N], ~ s\in \cS, ~ s'\in A(s), \nonumber\\
    & Q^n_{sd} = v(d|s;\bbt), \quad \forall n\in[N],~ s\in\cS, \nonumber\\
    & u^n_s \;\geq\; \log\!\left(\sum_{t\in A(s)} \exp(Q^n_{st})\right), \quad \forall n\in[N], ~ s\in\cS. \label{eq:ctr-un-s}
\end{align}
Here we note that the equality definition in \eqref{eq:un-s-def} can be safely replaced by the inequality constraints in \eqref{eq:ctr-un-s}. The reason is that the objective function decreases monotonically in $u^n_{s^n_0}$, so at optimality there is no incentive to make $u^n_s$ larger than the log-sum-exp value. In other words, any slack in \eqref{eq:ctr-un-s} would strictly worsen the objective, and hence all inequalities must bind at the optimum. This guarantees that the relaxation is exact and that \eqref{prob:mle-ec-3} is equivalent to \eqref{prob:mle-ec-2}.

The constraints in \eqref{eq:ctr-un-s} are of the generic form
\[
x \;\geq\; \log\!\left(\sum_{i=1}^m \exp(w_i)\right).
\]
It is standard in conic optimization that such log-sum-exp constraints can be reformulated exactly using exponential cones. Specifically, we have the equivalence
\[
x \;\ge\; \log\!\Big(\sum_{i=1}^m e^{w_i}\Big)
\quad\Longleftrightarrow\quad
\begin{cases}
    (w_i-x,\,1,\,r_i)\in \mathcal{K}_{\exp}, & \forall i, \\[0.5em]
    \sum_{i=1}^m r_i \le 1.
\end{cases}
\]
Indeed, the  condition $(w_i-x,\,1,\,r_i)\in \mathcal{K}_{\exp}$ implies that $r_i \ge e^{w_i-x}$. Summing over $i$ yields
\[
\sum_{i=1}^m r_i \;\ge\; \sum_{i=1}^m e^{w_i-x} \;=\; e^{-x} \sum_{i=1}^m e^{w_i}.
\]
Thus the constraint $\sum_i r_i \le 1$ enforces
$e^{-x} \sum_{i=1}^m e^{w_i} \;\le\; 1,$
which is equivalent to $x \ge \log \sum_i e^{w_i}$. This provides an exact exponential-cone representation of the log-sum-exp inequality. Applying this transformation to \eqref{eq:ctr-un-s}, we obtain the following exponential-cone constraints using auxiliary variables $r^n_{ss'}$:
\begin{align*}
   & (Q^n_{ss'}-u^n_s,\,1,\,r^n_{ss'}) \in \mathcal{K}_{\exp}, 
   && \forall n\in[N],~s\in\cS,~s'\in A(s),\\
   & \sum_{s'\in A(s)} r^n_{ss'} \le 1, 
   && \forall n\in[N],~s\in\cS.
\end{align*}
Collecting everything, the exponential-cone reformulation of \eqref{prob:mle-ec-3} is:

\begin{tcolorbox}[colback=gray!10, colframe=black!60, title={Exponential-Cone Formulation for RL Estimation}, breakable]
\begin{align}
\max_{\bbt,\bQ,\bu,\br}\quad 
& \sum_{n\in[N]} \Big( v(\sigma_n\mid \bbt) - u^n_{s^n_0} \Big) \label{prob:mle-ec-exp}\tag{\sf MLE-EC-EXP}\\
\text{s.t.}\quad 
& Q^n_{ss'} \;\ge\; v(s'|s;\bbt) + u^n_{s'}, 
&& \forall n,~s\in\cS,~s'\in A(s), \nonumber\\
& Q^n_{sd} \;=\; v(d|s;\bbt), 
&& \forall n,~s\in\cS, \nonumber\\
& (Q^n_{ss'}-u^n_s,\,1,\,r^n_{ss'}) \in \mathcal{K}_{\exp}, 
&& \forall n,~s\in\cS,~s'\in A(s), \nonumber\\
& \sum_{s'\in A(s)} r^n_{ss'} \le 1, 
&& \forall n,~s\in\cS. \nonumber
\end{align}
\end{tcolorbox}
Problem \eqref{prob:mle-ec-exp} is a \emph{convex conic program}: the objective is linear in $(\bbt,\bQ,\bu)$, the coupling constraints are affine, and the only nonlinearities arise through memberships in the exponential cone $\mathcal{K}_{\exp}$, which is convex. By construction of the epigraph reformulation, \eqref{prob:mle-ec-exp} is equivalent to \eqref{prob:mle-ec-2}, and hence also equivalent to \eqref{prob:mle-ec-ineq}. Therefore, any global optimum returned by a modern exponential-cone solver (e.g., MOSEK) is a globally optimal estimator for the RL MLE problem.

The main advantage of solving the exponential-cone program \eqref{prob:mle-ec-exp}, compared to generic gradient-based convex optimization methods, is its \textit{numerical robustness} and guaranteed \textit{polynomial-time complexity}. Interior-point solvers \citep{nesterov1994interior,mosek2023} for conic programs can exploit the exact conic structure of the log-sum-exp constraints, avoid issues of ill-conditioning common in gradient descent, and typically converge in far fewer iterations. In addition, solvers such as MOSEK \citep{MOSEKcookbook} handle scaling automatically and return certificates of optimality, whereas first-order methods often require delicate step-size tuning and may converge slowly on large-scale networks. 

\paragraph{Computational Complexity of the Exponential-Cone Reformulation: }
As mentioned earlier, one of the main advantages of the exponential-cone reformulation proposed in~\eqref{prob:mle-ec-exp} 
is that it can be efficiently solved using a modern interior-point method with polynomial-time complexity. In the following, we now delve deeper into this complexity analysis.

Interior-point methods \citep{nesterov1994interior} are the state of the art for solving exponential-cone programs. For a generic conic optimization problem with $n$ decision variables and $m$ exponential-cone constraints, the iteration complexity is 
$\mathcal{O}\!\left(\sqrt{m}\,\log\frac{1}{\epsilon}\right),$
where $\epsilon$ denotes the required accuracy. Each iteration requires forming and solving a Newton system whose dimension is proportional to $n+m$. The cost of solving such a system via direct methods is typically $\mathcal{O}((n+m)^3)$, though structure-exploiting solvers can often reduce this to nearly quadratic complexity. Hence, the overall worst-case complexity of an interior-point solver for the exponential-cone reformulation is 
$\mathcal{O}\!\Big((n+m)^3\sqrt{m}\,\log(1/\epsilon)\Big).$

For problem \eqref{prob:mle-ec-exp}, the problem size scales with the network and the data as follows. Each observation $n$ introduces a set of variables $Q^n_{ss'}$ for every arc $(s,s')$, one variable $u^n_s$ for each state $s$, and auxiliary epigraph variables $r^n_{ss'}$ for each arc. Thus, the total number of decision variables is
$n = \mathcal{O}\big(N(|\cA|+|\cS|)\big),$
where $N$ is the number of observed paths, $|\cA|$ is the number of arcs in the network, and $|\cS|$ is the number of states. The number of exponential-cone constraints is also proportional to $N|\cA|$, since each arc contributes one cone constraint. Therefore, the solver complexity in terms of the problem parameters is on the order of
$\mathcal{O}\!\left( (N(|\cA|+|\cS|))^3 \sqrt{N|\cA|}\,\log(1/\epsilon)\right).$
It is worth emphasizing that gradient-based methods such as projected gradient descent or stochastic gradient descent have a per-iteration cost of $\mathcal{O}(N|\cA|)$, since each iteration requires traversing all observed arcs to compute gradients of the log-sum-exp terms. Their iteration complexity to achieve an $\epsilon$-accurate solution is typically $\mathcal{O}(1/\epsilon)$, leading to an overall complexity of 
$\mathcal{O}\!\left(\frac{N|\cA|}{\epsilon}\right)$ \citep{Bubeck2015}.
In practice, this can result in a very large number of iterations, especially when high accuracy is required or when the problem is ill-conditioned. By contrast, interior-point methods are  usually shown to converge within a modest number of iterations—often, regardless of the target accuracy $\epsilon$, but at the expense of a much higher per-iteration computational cost \citep{BoydVandenberghe2004}.

In summary, the exponential-cone reformulation yields a convex conic program with polynomial-time solvability and strong global optimality guarantees. The theoretical worst-case complexity is cubic in the number of variables and constraints, which can be demanding for very large-scale networks. Nevertheless, for moderate-sized networks, interior-point solvers such as MOSEK are highly effective and often far more reliable than first-order methods, especially when accuracy and robustness are critical.

Compared to the classical NFXP algorithm, solving the exponential-cone formulation offers several important advantages. The exp-cone approach is considerably \textit{more robust with respect to initialization}, since feasibility is directly enforced by convex constraints, whereas NFXP requires carefully chosen initial parameters $\bbt$ to ensure that the Bellman equations admit a solution throughout the estimation process \citep{MaiFrejinger2022}. Moreover, NFXP is computationally demanding because it requires solving multiple systems of linear equations both to evaluate the value function $V$ and to compute its Jacobian, with the number of such systems growing proportionally to the dimension of $\bbt$. In contrast, the computational burden of the exp-cone method depends primarily on the size of the network rather than the number of parameters, making it more scalable in that respect. 

\subsection{Network Trimming for Large-Scale Problems}\label{sec:network-trimming}
While the exponential-cone reformulation is robust and guarantees global optimality, the program \eqref{prob:mle-ec-exp} can become prohibitively large in real-world transportation networks. In such settings, the solver must process a vast number of cone constraints, often resulting in heavy memory consumption and long computation times. To address these challenges, it is useful to apply additional pre-processing steps, such as network reduction or approximate formulations, in order to improve scalability. In the following, we introduce a \emph{network trimming} procedure that helps reduce the problem size while maintaining the statistical validity of the estimation.

The idea is that,
 given a specific origin–destination  pair $(o,d)$, not all states in the network are equally relevant. In particular, states that are far from both the origin and the destination are very unlikely to appear in feasible paths, and thus contribute negligibly to the likelihood. Removing such states from the optimization problem can reduce the size of the exponential-cone program significantly without affecting estimation outcomes. To operationalize this idea, we use the concept of \emph{network flow} under a reference parameter vector $\bbt^0$. We first choose $\bbt^0$ such that the deterministic utilities $v(s'|s;\bbt^0)$ are sufficiently negative (e.g., by scaling travel times or costs), which guarantees that the Bellman equation $\bV = \cT_{\bbt^0}[\bV]$ admits a well-defined solution. Based on this solution, we can compute approximate choice  probabilities at each state $P_{\bbt^0}(s'|s)$. Specifically, to compute flow values from the orginal state $o$,  let $P_{\bbt^0}$ denote the $|\cS|\times|\cS|$ transition matrix with entries
\[
\big(P_{\bbt^0}\big)_{s,s'} = 
\begin{cases}
P_{\bbt^0}(s'|s), & \text{if } s' \in A(s),\\
0, & \text{otherwise}.
\end{cases}
\]
Let $F \in \mathbb{R}^{|\cS|}$ be the vector of flow values, and let $e_o$ be the unit vector corresponding to the origin state $o$. Injecting one unit of flow at $o$ and propagating it through the network yields the fixed-point system
$F = e_o + P_{\bbt^0}^\top F.$
Rearranging gives the linear system
$(I - P_{\bbt^0}^\top) F = e_o.$
Provided the spectral radius of $P_{\bbt^0}$ is strictly less than one, this system admits a unique finite solution \citep{BailComi08}. Each entry $F(s)$ then represents the expected flow through state $s$, starting from origin $o$ under the reference policy $P_{\bbt^0}$.

For the trimming criterion, states with flow values below a prescribed threshold $\varepsilon > 0$ are deemed irrelevant and can be removed from the network, along with their associated arcs:
$\cS_{\text{trim}} = \{ s \in \cS : F(s) \geq \varepsilon \}$. The optimization problem is then restricted to the trimmed network $(\cS_{\text{trim}}, \cA_{\text{trim}})$, where $\cA_{\text{trim}}$ denotes the set of arcs between states in $\cS_{\text{trim}}$. This results in a reduced exponential-cone program with significantly fewer variables and constraints.
It is worth noting that, beyond reducing the size of the network, the trimming procedure preserves the structural property required in our convex reformulation. In particular, the trimmed network $(\cS_{\text{trim}},A_{\text{trim}})$ automatically satisfies Assumption~\eqref{assump:A1} (network connectivity with respect to the origin), even if the original network does not satisfy this assumption. We formalize this in the proposition below.
\begin{proposition}
   The trimmed network $(\cS_{\text{trim}},\cA_{\text{trim}})$ satisfies Assumption~\eqref{assump:A1}.  
   That is, for any state $s \in \cS_{\text{trim}}$, there exists a directed path from the origin $o$ to $s$.
\end{proposition}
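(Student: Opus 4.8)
The plan is to translate the algebraic condition $F(s)\ge\varepsilon>0$ into the purely combinatorial statement that $s$ is reachable from $o$, by expanding the flow vector as a power series in the transition matrix. First I would confirm that the system is well-posed: since the spectral radius of $P_{\bbt^0}$ is strictly below one, the matrix $I-P_{\bbt^0}^\top$ is invertible and the Neumann series converges, so that
\[
F \;=\; (I-P_{\bbt^0}^\top)^{-1}e_o \;=\; \sum_{k=0}^{\infty}\big(P_{\bbt^0}^\top\big)^k e_o .
\]
Reading off the coordinate at $s$ and using $\big(P_{\bbt^0}^\top\big)^k=\big(P_{\bbt^0}^k\big)^\top$ gives $F(s)=\sum_{k\ge 0}\big(P_{\bbt^0}^k\big)_{o,s}$, i.e.\ $F(s)$ is the total probability weight of all directed walks from $o$ to $s$.

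The second step is the core equivalence $F(s)>0 \iff o$ reaches $s$. Because every logit choice probability $P_{\bbt^0}(s'\mid s)$ is \emph{strictly} positive for $s'\in A(s)$, an arc $(s,s')$ is present exactly when $(P_{\bbt^0})_{s,s'}>0$; hence $\big(P_{\bbt^0}^k\big)_{o,s}>0$ precisely when there is a directed walk of length $k$ from $o$ to $s$. Since every summand in $F(s)=\sum_{k\ge0}\big(P_{\bbt^0}^k\big)_{o,s}$ is nonnegative, $F(s)>0$ holds if and only if at least one such walk exists. As each retained state satisfies $F(s)\ge\varepsilon>0$ by construction of $\cS_{\text{trim}}$, every state kept by the trimming rule is reachable from $o$ in the network, which already rules out the pathological unreachable states that would break the equivalence established in Theorem~\ref{th:ineq-relax}.

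The step I expect to be the main obstacle is \emph{upgrading} ``reachable in the original network'' to ``reachable using only the arcs $\cA_{\text{trim}}$ of the trimmed network,'' which is what Assumption~\ref{assump:A1} genuinely demands. The natural strategy is to peel off, from any $s\in\cS_{\text{trim}}$, a predecessor of maximal flow and recurse back toward $o$, thereby exhibiting a witnessing path all of whose intermediate states also lie in $\cS_{\text{trim}}$. The difficulty is that the flow $F$ is \emph{not} monotone along directed walks: where many low-flow branches merge into a common successor, a downstream state can accumulate flow $\ge\varepsilon$ while being fed entirely through intermediaries whose flow lies below the threshold, so a naive backward peeling can leave $\cS_{\text{trim}}$. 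Resolving this cleanly therefore requires either strengthening the retention criterion (e.g.\ keeping $s$ only if it lies on a single walk whose minimum flow exceeds $\varepsilon$), or invoking additional structure—such as acyclicity, under which flow \emph{is} non-increasing away from $o$ and the induced subgraph on $\{F\ge\varepsilon\}$ is automatically connected to the origin. This is where the real content of the proposition lies, and I would devote the bulk of the argument to making the within-trimmed-set propagation rigorous.
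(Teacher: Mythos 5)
Your first two steps --- the Neumann-series expansion $F=\sum_{k\ge 0}(P_{\bbt^0}^\top)^k e_o$ and the equivalence between $F(s)>0$ and the existence of a directed walk from $o$ to $s$ --- are a sharper, fully rigorous version of exactly the argument the paper gives: the paper's proof simply asserts that a strictly positive flow value implies the existence of a path from $o$ to $s$ with positive probability under the reference policy, and stops there. Up to that point you and the paper coincide.

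The obstacle you flag in your final paragraph is genuine, and it is not resolved in the paper either. The paper's proof establishes reachability of $s$ from $o$ \emph{in the original network}, whereas Assumption~\ref{assump:A1} applied to $(\cS_{\text{trim}},\cA_{\text{trim}})$ requires a path whose intermediate states all survive the trimming and whose arcs all lie in $\cA_{\text{trim}}$. Since $F$ is not monotone along walks --- a merge state can satisfy $F(s)\ge\varepsilon$ while every one of its predecessors has flow strictly below $\varepsilon$ (for instance, several branches each carrying flow $\varepsilon/2$ merging into $s$) --- the witnessing path produced by the positivity argument may pass through states that are deleted, so the proposition as stated does not follow from the flow criterion alone. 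Your diagnosis that closing this gap requires either a strengthened retention rule (keep $s$ only if some single walk reaching it has pointwise flow at least $\varepsilon$) or additional structure such as acyclicity, under which the induced subgraph on $\{F\ge\varepsilon\}$ is connected to the origin, is precisely where the missing content lies. In short: your attempt is incomplete, but the incompleteness you identify is a defect of the published one-line proof rather than a failure of your approach.
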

\begin{proof}
By construction, a state $s$ is included in $\cS_{\text{trim}}$ if and only if its flow value $F(s)$, obtained from the system 
$(I - P_{\bbt^0}^\top)F = e_o,$
satisfies $F(s) \geq \varepsilon > 0$. Since $F(s)$ represents the expected amount of unit flow arriving at state $s$ when starting from the origin $o$ and propagating along feasible transitions, a strictly positive flow value $F(s) > 0$ implies that there must exist at least one path from $o$ to $s$ with positive probability under the reference policy $P_{\bbt^0}$.  Therefore, for every state $s \in \cS_{\text{trim}}$, the trimming rule ensures the existence of a path from $o$ to $s$. This is precisely the requirement of Assumption~\eqref{assump:A1}. Hence, the trimmed network $(\cS_{\text{trim}},A_{\text{trim}})$ satisfies the connectivity assumption.
\end{proof}

This trimming procedure has two important properties: (i) it preserves all high-probability states and arcs, ensuring that the resulting estimator remains accurate; and (ii) it substantially reduces the size of the convex reformulation in practice, particularly for large networks with many peripheral states. In applications, the choice of the threshold $\varepsilon$ provides a trade-off between computational tractability and approximation fidelity. Our later experimental results will demonstrate that combining the proposed network trimming procedure with the convex reformulation enables efficient estimation of the RL model on a large-scale, real-world network.

\section{Numerical Experiments}\label{sec:experiments}
To evaluate the performance of the proposed convex reformulation for RL model estimation, we focus on two fundamental applications of the recursive logit framework: \emph{route choice modeling}~\citep{FosgFrejKarl13} and \emph{multiple discrete choice modeling}~\citep{tran2024network}. These two domains feature networks with distinct structural characteristics, providing complementary test cases for assessing the effectiveness and generality of our approach. The proposed ECP method is compared against the standard NFXP algorithm~\citep{Rust87}, with the MOSEK conic solver~\citep{mosek2023} used to handle the exponential-cone reformulation.
Although several alternative methods have been proposed as substitutes for NFXP, they are not directly suitable for inclusion in our experiments. For instance, the value decomposition method of~\citet{mai2018decomposition} can substantially accelerate RL model estimation but cannot accommodate general route choice settings with individual-specific attributes. Similarly, the MPEC-based approaches developed for dynamic discrete choice models~\citep{SuJudd2012,SchjerningNotesMPEC} have been shown to be less computationally efficient than the standard NFXP, particularly in large-scale problems \citep{IskhakovEtAl2016}. Consequently, our comparisons focus on the classical NFXP benchmark to ensure both fairness and interpretability in evaluating the advantages of the proposed convex reformulation.

All experiments are 
conducted on a Google Cloud TPU v3-8 server equipped with a 96-core Intel Xeon 2.00 GHz CPU. 
In the following, we present our experimental results for each of the two application domains.

\subsection{Multiple Discrete Choice Problem}
Multiple discrete choice (MDC) problems \citep{Bhat05,Bhat08, tran2024network} are central to a wide range of economic and behavioral applications, as individuals or households often make simultaneous selections of multiple alternatives rather than a single option. Examples include consumers purchasing a basket of goods during a shopping trip, families choosing several entertainment or leisure activities, or commuters selecting a combination of transport modes and transfer options. The analysis of MDC behavior is therefore critical in domains such as marketing, urban planning, and transportation policy, as it enables richer modeling of preferences, substitution patterns, and complementarities among alternatives.

Traditional approaches to MDC within the RUM framework, dating back to \cite{McFa81}, typically model each combination of alternatives as a distinct composite choice. While theoretically sound, this formulation quickly becomes computationally intractable due to the combinatorial explosion in the number of possible choice sets. For instance, when $m$ elemental alternatives are available, the number of potential composite alternatives grows exponentially as $2^m$, rendering classical RUM estimation infeasible even for moderately sized $m$.

To address this scalability barrier, \cite{tran2024network} recently proposed the \emph{Logit-based Multiple Discrete Choice} (LMDC) model, a RUM-based framework that exploits a network-based representation of MDC problems. In LMDC, composite alternatives are represented as feasible paths in a DAG. Each node or link in the DAG corresponds to a decision regarding the inclusion or exclusion of an elemental alternative, and complete paths encode valid multi-item choice bundles. This construction transforms the original combinatorial problem into a RL route choice model on the DAG. Importantly, the recursive structure enables efficient computation of choice probabilities using RL techniques developed in the context of route choice modeling. As a result, the LMDC model preserves the theoretical rigor of the RUM framework while maintaining tractability and scalability in high-dimensional multiple discrete choice settings.

There are two DAG representations proposed for the LMDC problem, as illustrated in Figure \ref{fig:dags}. Each topology has its own properties. BiC-DAG has significantly fewer arcs compared to MuC-DAG, which reduces the graph density but also puts more pressure on individual arcs and causes computational instability.

\begin{figure}
\centering
\begin{subfigure}{0.48\linewidth}
    \includegraphics[width=\textwidth]{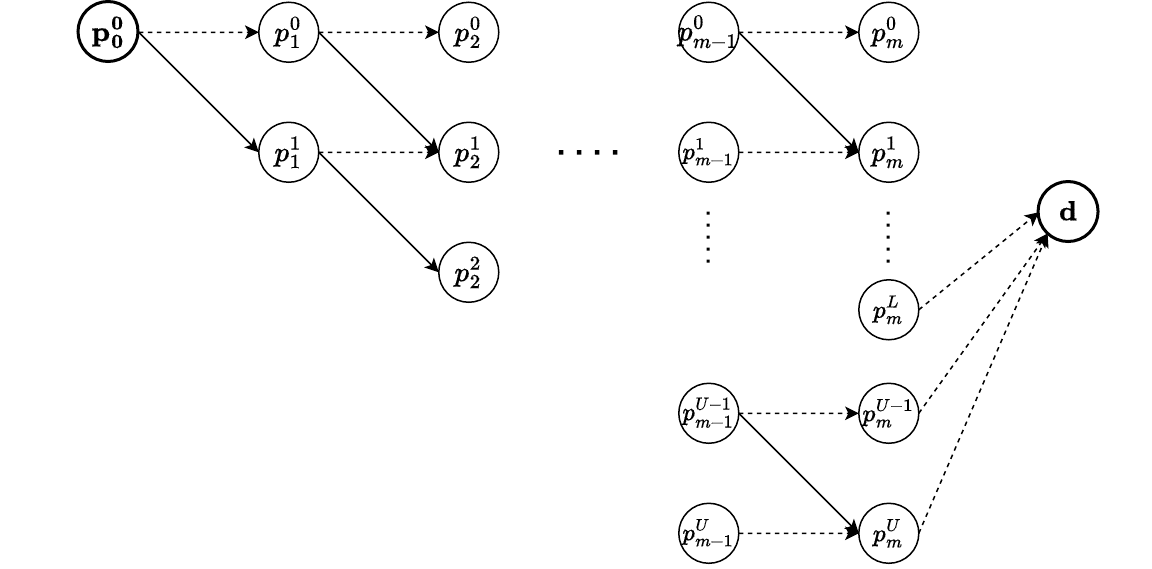}
    \caption{Binary-choice DAG (BiC-DAG)}
\end{subfigure}
\hfill
\begin{subfigure}{0.48\linewidth}
    \includegraphics[width=\textwidth]{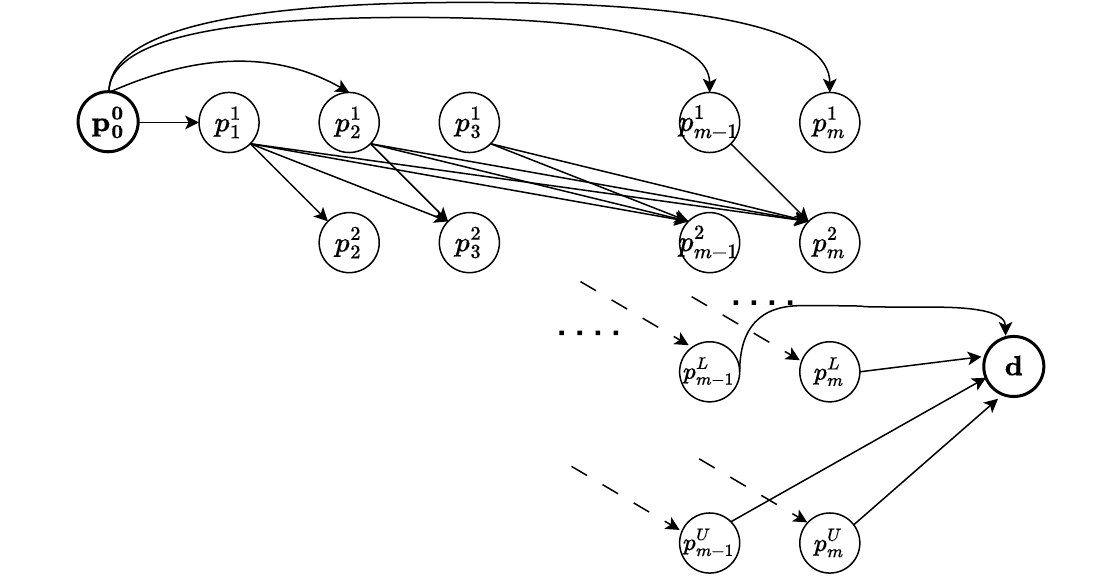}
    \caption{Multi-choice DAG (MuC-DAG)}
\end{subfigure}
\caption{DAG-based representations for LMDC model.}
\label{fig:dags}
\end{figure}

In this experiment, we compare the performance of the proposed \emph{equilibrium-constrained reformulation}, solved using the MOSEK conic optimizer, against the classical NFXP method originally employed in \citet{tran2024network} and prior studies on RL models.
 Recall that the NFXP approach estimates the parameters $\bbt$ by repeatedly solving a system of linear equations via value iteration in an inner loop, while updating $\bbt$ in an outer loop using gradient-based optimization. By contrast, our method reformulates the MLE problem as a convex exponential-cone program, which can be solved in a single stage using interior-point algorithms \citep{fujisawa1997interior}. 
Specifically, we consider the following two estimation methods for comparison:
\begin{itemize}
    \item \textbf{NFXP (Nested Fixed-Point)}:  
    The NFXP algorithm updates the parameter vector $\bbt$ using a quasi-Newton or gradient-based search. At each iteration, for a fixed $\bbt$, the log-likelihood function and its gradient must be evaluated. This requires solving systems of linear equations to recover the value functions, a computationally expensive inner loop. Although both the BiC-DAG and MuC-DAG topologies can be used equivalently in principle, the matrix-based calculations underlying the value iteration are not strongly influenced by graph density. Hence, the structurally denser MuC-DAG is typically the preferred representation for NFXP, as it provides more numerical stability in practice.
    \item \textbf{ECP (Equilibrium-Constrained Program)}:  
    Our equilibrium-constrained reformulation, which can also be expressed as an {exponential-cone program} \citep{chares2009cone}, is solved directly using MOSEK under its default settings. Unlike NFXP, this approach eliminates the need for a nested structure by treating both the parameters $\bbt$ and the value functions as decision variables in a single convex optimization problem. Since the size of the exponential-cone program scales with the number of arcs in the network, and thus depends heavily on graph density, we adopt the BiC-DAG topology in this case to reduce the number of constraints and improve computational efficiency.
\end{itemize}
This comparison underscores a fundamental methodological trade-off. The NFXP algorithm relies on repeated evaluations of an inner fixed-point system, and in practice it may suffer from numerical instability and the absence of guaranteed convergence. By contrast, our conic optimization approach exploits a convex reformulation that ensures global convergence within polynomial time. In the following, we present comparative results on both synthetic and real-world datasets to illustrate these trade-offs in practice.

\subsubsection{Synthetic Data}
The synthetic dataset, together with all settings and instances, is taken from the original paper \cite{tran2024network}. 
Problem instances vary by the number of elemental alternatives $m \in \{5,10,20,30,50\}$ and by the bounds $[L,U]$, 
which represent the minimum and maximum number of alternatives an individual may simultaneously consider. 
The lower bound is fixed as $L=0$ for all instances, allowing individuals to select nothing, 
while the upper bound $U$ is specified for each size: $U=3$ for $m=5$, $U \in \{3,5\}$ for $m=10$, 
$U \in \{5,10\}$ for $m=20$, $U \in \{10,20\}$ for $m=30$, and $U \in \{20,30\}$ for $m=50$. 
The elemental alternative values are identical to those in the original experiment. 
The ``ground-truth'' model is the LMDC framework with parameter vector $\bbt^0 = [-0.5, -0.02, -0.1]$. 
For each instance, two sets of observations are generated: one estimation set of size $1000$ for parameter estimation, 
and one prediction set of size $300$ to evaluate the average log-likelihood achieved by the estimated parameters. 
For comparison, log-likelihood values closer to zero indicate better predictive performance.

It is worth noting that the DAG networks used in this application are cycle-free. 
Consequently, both the equality-constrained formulation \eqref{prob:mle-ec} 
and the exponential-cone reformulation \eqref{prob:mle-ec-exp} always admit a unique optimal solution. 
In particular, solving the value functions in the NFXP framework is always successful under these settings.

\paragraph{Estimation time comparison.}
Since both the NFXP and ECP formulations are theoretically equivalent, and the underlying networks are acyclic, the estimation process always converges successfully. Consequently, both methods yield identical parameter estimates. We therefore focus our comparison exclusively on estimation times. Table~\ref{tab:synth-time} and Figure \ref{fig:synth-time-bar} reports the average and standard deviation of the estimation times obtained by both approaches over 40 independent runs. The results clearly highlight the computational advantages of the proposed ECP method. Although the NFXP approach using the MuC-DAG representation is slightly slower than the BiC-DAG variant---as also observed in \citet{tran2024network}---even the faster NFXP+BiC configuration remains significantly slower than our ECP approach using BiC-DAG. The superior efficiency of ECP in this setting can be attributed to the low-density topology of the BiC-DAG representation, which substantially reduces the size of the optimization problem. Conversely, when the high-density MuC-DAG is employed, the running time of ECP increases notably due to the larger number of arcs and decision states.

\begin{table}[htbp]
    \centering
    \caption{Estimation times (in seconds) for LMDC’s synthetic datasets.}
    \label{tab:synth-time}
    \begin{tabular}{lllll}
        \hline
        \multicolumn{2}{l}{Dataset} &  & \multicolumn{2}{l}{Model} \\ \cline{1-2} \cline{4-5}
        $m$ & $[L,U]$ &  & NFXP & ECP \\ \hline
        5   & [0,3]   &  & $0.09 \pm 0.02$ & $0.02 \pm 0.01$ \\ \hline
        10  & [0,3]   &  & $0.08 \pm 0.01$ & $0.03 \pm 0.00$ \\
        10  & [0,5]   &  & $0.10 \pm 0.01$ & $0.03 \pm 0.00$ \\ \hline
        20  & [0,5]   &  & $0.19 \pm 0.02$ & $0.11 \pm 0.06$ \\
        20  & [0,10]  &  & $0.30 \pm 0.03$ & $0.10 \pm 0.01$ \\ \hline
        30  & [0,10]  &  & $0.57 \pm 0.03$ & $0.17 \pm 0.02$ \\
        30  & [0,20]  &  & $0.99 \pm 0.07$ & $0.29 \pm 0.04$ \\ \hline
        50  & [0,20]  &  & $1.58 \pm 0.12$ & $0.34 \pm 0.04$ \\
        50  & [0,30]  &  & $3.97 \pm 0.30$ & $0.92 \pm 0.18$ \\ \hline
    \end{tabular}
\end{table}

\begin{figure}[htbp]
    \centering
    \includegraphics[width=0.85\linewidth]{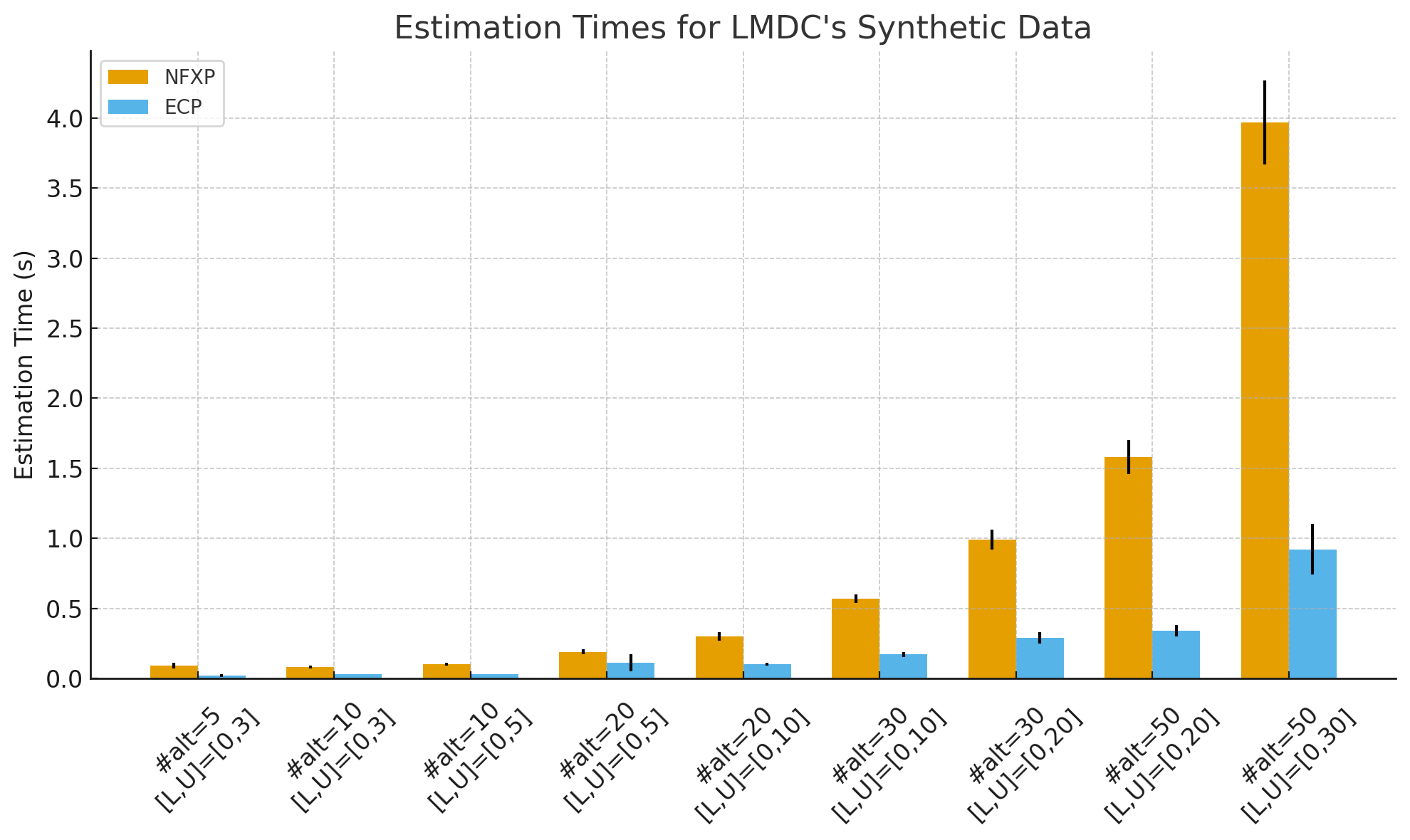}
    \caption{Estimation times (in seconds) for LMDC's synthetic data, comparing NFXP and ECP. 
    Each dataset is identified by the number of alternatives (\#alt) and the corresponding $[L,U]$ range.}
    \label{fig:synth-time-bar}
\end{figure}

\paragraph{Scalability with respect to the number of model parameters.}
We compare the scalability of the proposed ECP and the classical NFXP method in terms of the dimensionality of the parameter vector $\boldsymbol{\beta}$. As illustrated in Figure~\ref{fig:synth-time-params}, the estimation time of NFXP increases almost linearly as the number of parameters grows, whereas ECP remains largely unaffected. This contrast arises from their computational structures: NFXP must repeatedly solve a sequence of linear systems for both value function evaluation and Jacobian computation, with the number of such systems scaling proportionally to $\dim(\boldsymbol{\beta})$. In contrast, ECP incorporates the parameters as additional decision variables within a convex conic program whose size depends primarily on the network topology rather than on $\boldsymbol{\beta}$. Consequently, increases in parameter dimensionality introduce negligible computational overhead for ECP, as the underlying constraint structure and sparsity pattern remain fixed. Empirical results consistently confirm this behavior across all tested datasets: NFXP exhibits a steady increase in runtime with larger parameter sizes, occasionally showing minor fluctuations due to multithreading optimizations in low-level numerical libraries, while ECP maintains stable and consistent computation times. Overall, these findings demonstrate that ECP achieves superior scalability with respect to $\boldsymbol{\beta}$, making it well-suited for high-dimensional estimation problems.

\begin{figure}[htbp]
    \centering
    \includegraphics[width=\linewidth]{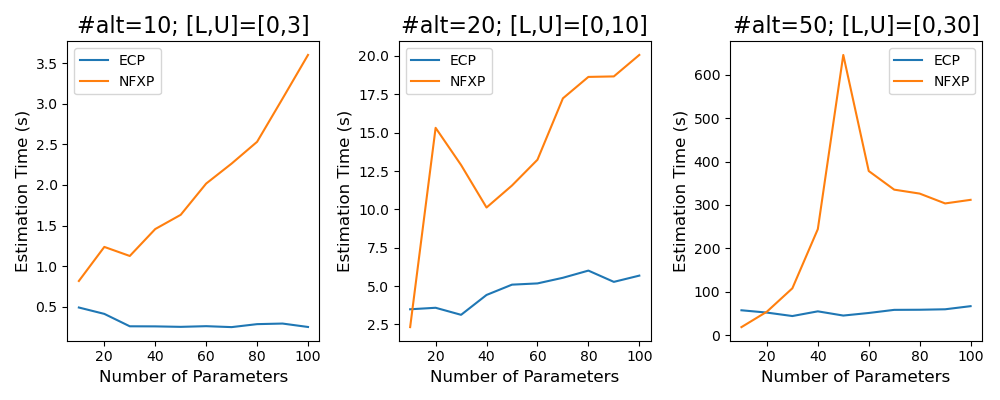}
    \caption{Estimation times (in seconds) for LMDC’s synthetic data as a function of the number of parameters.}
    \label{fig:synth-time-params}
\end{figure}


\subsubsection{Real Data}
\begin{table}[htbp]
    \centering
    \begin{tabular}{l|rrrrc|c}
\hline
\multirow{2}{*}{Model} & \multicolumn{4}{l}{Attributes}      & \multirow{2}{*}{$\mathcal{L}(\hat{\boldsymbol{\beta}})$} & \multirow{2}{*}{\begin{tabular}[c]{@{}c@{}}Time\\ (s)\end{tabular}} \\ \cline{2-5}
                       & Quantity & Price  & Gender & Const  &                                             &                                                                     \\ \hline
NFXP                   & 2.363    & -0.144 & 0.005  & -6.527 & \multicolumn{1}{r|}{-270{,}061}               & \multicolumn{1}{r}{332.72}                                          \\ \hline
ECP                    & 2.362    & -0.144 & 0.005  & -6.527 & \multicolumn{1}{r|}{-270{,}056}               & \multicolumn{1}{r}{27.82}                                           \\ \hline
\end{tabular}
    \caption{Estimation results for the Jewelry-Store dataset.}
    \label{tab:jewelry}
\end{table}
We further evaluate the practical performance of the proposed ECP against the classical NFXP approach using two real-world datasets---the \textit{Jewelry-Store} and \textit{Book-Crossing} datasets---provided in \citet{tran2024network}. Both experiments are conducted under identical experimental settings.  The \textit{Jewelry-Store} dataset contains purchase records from an online jewelry retailer collected over a three-year period. It includes 436 elemental alternatives (products), each characterized by four attributes: \textit{Quantity}, \textit{Price}, \textit{Gender}, and a constant term equal to one across all products. The estimation set consists of 34,291 observations. Table~\ref{tab:jewelry} summarizes the estimation results. The estimated parameters obtained by NFXP and ECP are nearly identical, confirming the theoretical equivalence established in earlier sections. The slight difference in log-likelihood values is attributed to minor numerical rounding effects inherent in the value iteration computations. In contrast, the computational efficiency difference is substantial: ECP achieves convergence in only 28 seconds, compared to 333 seconds for NFXP, representing more than an order-of-magnitude improvement in runtime.

\begin{table}[htbp]
    \centering
    \begin{tabular}{l|rrrrc|c}
\hline
\multirow{2}{*}{Model} & \multicolumn{4}{l}{Attributes}          & \multirow{2}{*}{$\mathcal{L}(\hat{\boldsymbol{\beta}})$} & \multirow{2}{*}{\begin{tabular}[c]{@{}c@{}}Time\\ (s)\end{tabular}} \\ \cline{2-5}
                       & Rating & Popularity & Lifespan & Const  &                                             &                                                                     \\ \hline
NFXP                   & 0.777  & 2.826      & -0.078   & -4.464 & \multicolumn{1}{r|}{-195{,}208}               & \multicolumn{1}{r}{70.57}                                           \\ \hline
ECP                    & 0.777  & 2.826      & -0.078   & -4.464 & \multicolumn{1}{r|}{-195{,}208}               & \multicolumn{1}{r}{7.11}                                            \\ \hline
\end{tabular}
    \caption{Estimation results for the Book-Crossing dataset.}
    \label{tab:book}
\end{table}

The second dataset, \textit{Book-Crossing}, contains four weeks of book purchase histories and user ratings from Amazon. This dataset comprises 100 best-selling books, each described by four attributes: \textit{Popularity}, \textit{Rating}, \textit{Lifespan}, and a constant term. The sample includes 16,989 observations. As shown in Table~\ref{tab:book}, both methods yield identical parameter estimates and log-likelihood values, demonstrating the robustness of the ECP approach. However, ECP again achieves a remarkable computational advantage, completing estimation in approximately 7 seconds compared to more than 70 seconds required by NFXP. This performance gap stems primarily from ECP’s convex reformulation, which eliminates the repeated nested evaluations of the Bellman equations inherent in NFXP. Overall, these real-world experiments confirm that ECP not only matches the statistical accuracy of the classical NFXP framework but also provides substantial gains in computational efficiency, particularly in large-scale settings with many product alternatives.

\subsection{Route Choice Modeling}
The RL model was originally proposed for route choice problems in transportation networks~\citep{Fosgerau2013}. It aims to model how travelers select paths between origin-destination pairs based on factors such as travel time or distance. Traditional multinomial logit models treat entire routes as discrete alternatives; however, this formulation quickly becomes intractable as the number of feasible paths grows combinatorially with network size. The RL model overcomes this limitation by framing route choice as a sequence of link-level decisions, enabling the computation of choice probabilities without explicitly enumerating all possible paths. Despite this theoretical advantage, the estimation of RL models remains computationally demanding. The NFXP algorithm  is the conventional approach for parameter estimation in the route choice context, but as discussed earlier, it frequently fails to recover valid value functions $\mathbf{V}$, particularly when the magnitudes of the parameters $\boldsymbol{\beta}$ are small or when the network exhibits ill-conditioned structures \citep{MaiFrejinger2022}. 




\subsubsection{Synthetic Data}

In this experiment, synthetic traffic networks are constructed using random geometric graph methods - a topology commonly used in graph algorithm benchmarks \citep{penrose2003random}. Two types of transport networks are considered: DAGs and undirected graphs. DAG-based networks have predetermined origin and destination nodes. Undirected graph-based networks are undirected versions of DAG instances (by converting all arcs into undirected edges). Four variants of transport networks are spawned (20, 30, 40 and 50 nodes), with 5 instances for each variant. In the RL model, each link represents a transition between a pair of connected arcs and is characterized by four attributes. The first, $TT(a)$, denotes the travel time on link or route $a$, which is proportional to the length of arc $a$. The second attribute, $LT(a \mid k)$, is a left-turn indicator that equals 1 if the turning angle from node $k$ to arc $a$ lies between $30^{\circ}$ and $150^{\circ}$. Similarly, $RT(a \mid k)$ is a right-turn indicator that equals 1 when the turning angle lies between $-150^{\circ}$ and $-30^{\circ}$. Finally, $UT(a \mid k)$ is a U-turn indicator that equals 1 if the absolute turning angle exceeds $150^{\circ}$ in either direction. These attributes collectively capture key geometric and behavioral features of route choice, allowing the RL model to reflect travelers’ preferences for smoother trajectories and to penalize sharp or reversing turns.

The following specification for the link utility function is employed:
\begin{equation}
    v(a \mid k; \boldsymbol{\beta}) 
    = \beta_{TT} \, TT(a) 
    + \beta_{LT} \, LT(a \mid k) 
    + \beta_{RT} \, RT(a \mid k) 
    + \beta_{UT} \, UT(a \mid k).
\end{equation}
To generate synthetic observations, we specify a ground-truth parameter vector $\widehat{\boldsymbol{\beta}}$ in which all coefficients are negative, reflecting travelers’ general preference to avoid delays and difficult maneuvers. Among these attributes, travel time exerts the strongest negative influence, while turning penalties decrease in magnitude from U-turns to left turns to right turns. Accordingly, the following parameter values are adopted across all experiments:
$\hat{\beta}_{TT} = -4,~  
\hat{\beta}_{LT} = -0.1,~  
\hat{\beta}_{RT} = -0.05,~ 
\hat{\beta}_{UT} = -0.3.$
A total of 3,000 observations are generated for parameter estimation and in-sample prediction. Further details on the data generation process for each experimental scenario are provided in the corresponding subsections.

\begin{table}[htbp]
    \centering
\begin{tabular}{llllllll}
\hline
\multicolumn{2}{l}{Test} &  & \multicolumn{2}{l}{NFXP} &  & \multicolumn{2}{l}{ECP} \\ \cline{1-2} \cline{4-5} \cline{7-8} 
Nodes     & Instance     &  & $\mathcal{L}(\hat{\beta})/N$    & Time (s)    &  & $\mathcal{L}(\hat{\beta})/N$      & Time (s)      \\ \hline
20        & 0            &  & -4.447    & 0.312       &  & -4.447      & 0.063         \\
20        & 1            &  & -4.112    & 0.292       &  & -4.112      & 0.100         \\
20        & 2            &  & -5.088    & 0.314       &  & -5.088      & 0.137         \\
20        & 3            &  & -4.327    & 0.332       &  & -4.327      & 0.084         \\
20        & 4            &  & -3.920    & 0.281       &  & -3.920      & 0.076         \\ \hline
30        & 0            &  & -4.173    & 0.360       &  & -4.173      & 0.290         \\
30        & 1            &  & -6.195    & 0.475       &  & -6.195      & 0.290         \\
30        & 2            &  & -4.294    & 0.363       &  & -4.294      & 0.111         \\
30        & 3            &  & -5.811    & 0.466       &  & -5.811      & 0.248         \\
30        & 4            &  & -5.658    & 0.385       &  & -5.658      & 0.200         \\ \hline
40        & 0            &  & -6.273    & 0.547       &  & -6.273      & 0.408         \\
40        & 1            &  & -5.553    & 0.376       &  & -5.553      & 0.221         \\
40        & 2            &  & -6.362    & 0.507       &  & -6.362      & 0.481         \\
40        & 3            &  & -4.473    & 0.446       &  & -4.473      & 0.442         \\
40        & 4            &  & -5.023    & 0.439       &  & -5.023      & 0.158         \\ \hline
50        & 0            &  & -6.711    & 0.597       &  & -6.711      & 0.543         \\
50        & 1            &  & -5.271    & 0.445       &  & -5.271      & 0.271         \\
50        & 2            &  & -7.293    & 0.675       &  & -7.293      & 0.596         \\
50        & 3            &  & -7.225    & 0.511       &  & -7.225      & 0.283         \\
50        & 4            &  & -5.182    & 0.493       &  & -5.182      & 0.321         \\ \hline
\end{tabular}
    \caption{Estimation results on DAG-based synthetic networks.}
    \label{tab:dag-vi}
\end{table}

\paragraph{Estimation time comparison.} Table~\ref{tab:dag-vi} presents the estimation results obtained on DAG-based synthetic networks. Across all test instances, the NFXP and ECP approaches produce identical log-likelihood values, confirming their theoretical equivalence and the numerical consistency of the proposed equilibrium-constrained formulation. This alignment demonstrates that the ECP method recovers the same optimal solution as the conventional NFXP algorithm. However, the runtime comparison clearly highlights the computational advantage of ECP. For every tested network size, ECP achieves convergence in a fraction of the time required by NFXP—often more than an order of magnitude faster. This improvement stems from the convex reformulation of the estimation problem, which eliminates the need for nested fixed-point iterations inherent to NFXP. Consequently, ECP not only preserves the statistical accuracy of the original recursive logit estimation but also significantly enhances its scalability and numerical stability, particularly for larger and more complex network structures.

\begin{table}[htbp]
    \centering
\begin{tabular}{llllllll}
\hline
\multicolumn{2}{l}{Test} &  & \multicolumn{2}{l}{NFXP} &  & \multicolumn{2}{l}{ECP} \\ 
\cline{1-2} \cline{4-5} \cline{7-8}
Nodes & Instance &  & \shortstack{Successful rate\\ (\%)} & Time (s) &  & \shortstack{Successful rate\\ (\%)} & Time (s) \\ 
\hline
20 & 0 &  & 25.0 & 4.046 &  & 100.0 & 3.460 \\
20 & 1 &  & 37.5 & 3.529 &  & 100.0 & 6.171 \\
20 & 2 &  & 17.5 & 7.651 &  & 100.0 & 7.113 \\
20 & 3 &  & 47.5 & 7.712 &  & 100.0 & 3.625 \\
20 & 4 &  & 7.5  & 10.791 &  & 100.0 & 3.036 \\ 
\hline
30 & 0 &  & 15.0 & 5.859 &  & 100.0 & 17.193 \\
30 & 1 &  & 32.5 & 14.331 &  & 100.0 & 16.906 \\
30 & 2 &  & 10.0 & 4.489 &  & 100.0 & 5.746 \\
30 & 3 &  & 2.5  & 4.190 &  & 100.0 & 11.339 \\
30 & 4 &  & 55.0 & 13.284 &  & 100.0 & 11.699 \\ 
\hline
40 & 0 &  & 60.0 & 6.036 &  & 100.0 & 29.755 \\
40 & 1 &  & 20.0 & 17.850 &  & 100.0 & 27.253 \\
40 & 2 &  & 20.0 & 22.246 &  & 67.5 & 67.100 \\
40 & 3 &  & 70.0 & 8.234 &  & 97.5 & 37.217 \\
40 & 4 &  & 55.0 & 8.639 &  & 100.0 & 13.137 \\ 
\hline
50 & 0 &  & 85.0 & 10.931 &  & 47.5 & 403.627 \\
50 & 1 &  & 45.0 & 18.749 &  & 100.0 & 24.440 \\
50 & 2 &  & 37.5 & 12.979 &  & 97.5 & 472.037 \\
50 & 3 &  & 5.0  & 6.967 &  & 100.0 & 28.690 \\
50 & 4 &  & 77.5 & 7.845 &  & 92.5 & 87.355 \\ 
\hline
\end{tabular}
\caption{Estimation results on undirected cyclic graph-based synthetic networks.}
\label{tab:undirected}
\end{table}

We now compare the performance of the  NFXP and the proposed ECP approaches under undirected cyclic network settings, which are known to be substantially more challenging for RL model estimation~\citep{MaiFrejinger2022}. In such networks, both methods occasionally fail to produce valid value functions and link choice probabilities due to the presence of cycles, which complicates the convergence of the Bellman operator. To overcome this issue when generating synthetic observations, we temporarily convert each undirected graph into a DAGs. Specifically, the conversion process limits the maximum path length to the number of nodes in the original network. For each undirected graph, a corresponding multi-layer DAG is constructed following these rules: each layer contains one duplicate of all nodes (without intra-layer edges), and a node at layer $k$ can only connect to nodes at layer $k+1$ if the two nodes are connected in the original undirected graph. The origin node is positioned in the first layer and the destination node in the final layer.

We evaluate the two methods using two performance metrics: (i) the \emph{successful rate}, defined as the percentage of runs (out of 40) where the RL model is successfully estimated without numerical failure, and (ii) the \emph{estimation time}, computed as the average runtime over successful runs. A method is deemed unsuccessful if it terminates prematurely due to numerical instability (e.g., the solver is forced to stop) or produces invalid log-likelihood values such as NaN or positive values. The comparative results are summarized in Table~\ref{tab:undirected}.

As shown in the table, the proposed ECP method demonstrates a clear advantage over NFXP in terms of both stability and computational efficiency. Across all network sizes, the success rate of ECP remains close to 100\%, whereas NFXP exhibits highly variable and often poor stability, with success rates dropping as low as 2.5\% in some instances. This instability arises because NFXP relies on iterative fixed-point computations of the value function, which are highly sensitive to initialization and scaling in cyclic network structures. In contrast, ECP formulates estimation as a single convex optimization problem, avoiding the nested iterations that typically lead to divergence in NFXP. Regarding computational performance, ECP also achieves competitive or faster runtimes in most cases despite solving a larger conic program. For smaller networks, both methods exhibit comparable speeds; however, as network size grows, NFXP’s runtime increases more erratically due to repeated failures and restarts, while ECP maintains consistent convergence behavior. For example, with 50-node networks, ECP achieves nearly full stability while completing estimation in practical time, whereas NFXP succeeds in less than half the runs and displays large runtime fluctuations. Overall, these results highlight that ECP provides a significantly more robust and reliable estimation framework for RL models in complex cyclic network environments, preserving accuracy while offering superior numerical stability and scalability.



\paragraph{Scalability with respect to the number of parameters.} 
Similar to the MDC experiments discussed earlier, we evaluate the scalability of both the NFXP and ECP approaches with respect to the dimensionality of the parameter vector $\bbt$. The experiments are conducted on both DAGs and undirected cyclic networks. For each combination of graph size and attribute dimensionality, five instances are generated, and each instance produces 3,000 observations using the same data generation procedures described previously for the DAG and undirected cases. To ensure statistical reliability, each configuration is repeated 40 times, and the reported estimation times correspond to the mean runtime over all successful runs.
\begin{figure}[htbp]
    \centering
    \includegraphics[width=\linewidth]{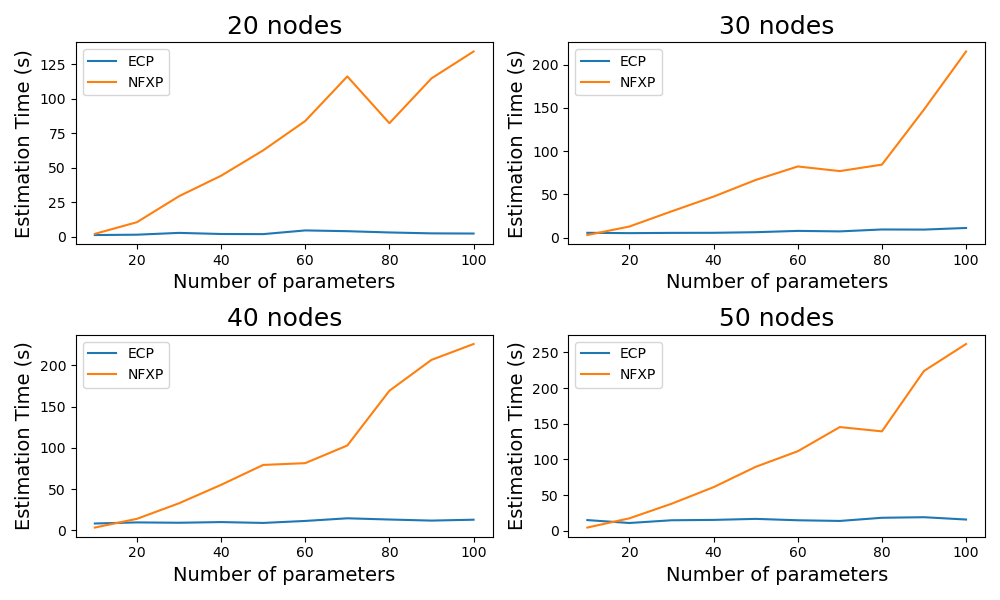}
    \caption{Estimation times for DAG instances by number of parameters.}
    \label{fig:route-time-params}
\end{figure}

For DAG instances, the number of nodes ranges from 20 to 50, and the number of attributes per link varies from 10 to 100. The results, shown in Figure~\ref{fig:route-time-params}, exhibit a consistent pattern with those observed in the LMDC framework: ECP’s estimation time remains largely unaffected by the increase in parameter dimensionality, while NFXP shows an approximately linear growth in runtime. 

\begin{figure}[htbp]
    \centering
    \includegraphics[width=\linewidth]{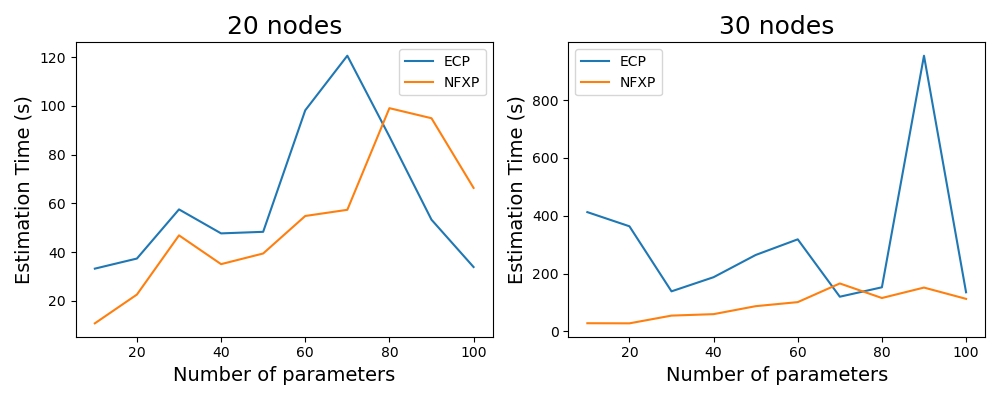}
    \caption{Estimation times for undirected graph instances by number of parameters.}
    \label{fig:route-time-params-undirected}
\end{figure}

For undirected graphs, where the estimation process is considerably more computationally demanding, we restrict the analysis to 20- and 30-node networks. The corresponding results are presented in Figure~\ref{fig:route-time-params-undirected}. Only runs where both NFXP and ECP successfully converged are included, since NFXP frequently encountered numerical instabilities, as demonstrated in earlier experiments. Unlike the DAG case, the results on undirected networks reveal a distinct behavioral pattern. While NFXP continues to scale linearly with the number of parameters, ECP exhibits irregular and less predictable timing behavior, failing to scale smoothly. This degradation in performance is primarily attributed to the underlying cyclic topology: the equilibrium constraints in ECP are more sensitive to feedback loops, which exacerbate the computational burden in solving the conic program. Consequently, ECP’s runtime grows sharply as network complexity increases, even though it still achieves higher stability and convergence reliability compared to NFXP.

In summary, ECP maintains near-constant scalability with respect to parameter dimensionality in DAG settings but suffers in cyclic graphs due to the structural complexity of undirected networks. Conversely, NFXP demonstrates better time scalability in cyclic settings but remains less stable overall. These findings collectively highlight the trade-off between stability and computational efficiency across different network topologies.

\subsubsection{Real-world Large-scale Traffic Network}
We evaluate the performance of the proposed method using a real-world dataset collected from the Borlänge transportation network in Sweden. This dataset has been widely employed in previous studies on route choice modeling~\citep{FosgFrejKarl13,MaiFosFre15,mai2018decomposition}. The underlying network consists of $3{,}077$ nodes and $7{,}459$ links. Since traffic conditions in this network are generally uncongested, link travel times can be treated as static and deterministic.  The sample includes $1{,}832$ observed trips, each represented by a simple path containing at least five links. In total, the dataset covers $466$ destinations, $1{,}420$ distinct origin--destination  pairs, and more than $37{,}000$ link-level choices. The instantaneous link utility function is specified as:
\begin{equation}\nonumber
    v(a \mid k; \boldsymbol{\beta}) 
    = \beta_{TT} \, TT(a)
    + \beta_{LT} \, LT(a \mid k)
    + \beta_{LC} \, LC
    + \beta_{UT} \, UT(a \mid k),
\end{equation}
where $TT(a)$ denotes the travel time on link $a$, $LT(a \mid k)$ is a left-turn indicator that equals one if the transition from link $k$ to link $a$ involves a left turn, $LC$ is a constant attribute equal to one for all links, and $UT(a \mid k)$ is a U-turn indicator for transitions that reverse direction. The topology of the Borlänge transportation network derived from this dataset is illustrated in Figure~\ref{fig:borlange}.
\begin{figure}[htbp]
    \centering
    \includegraphics[width=0.6\linewidth]{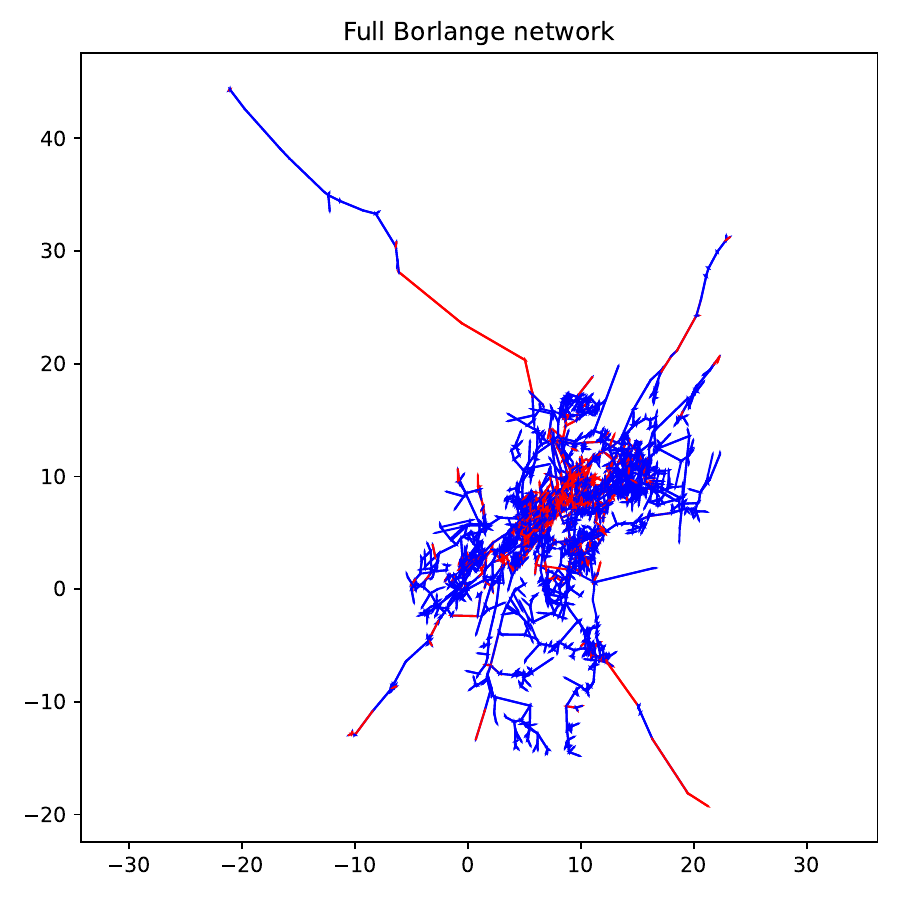}
    \caption{Borlange traffic network. Edges in red are ones directly lead to one of 466 destinations.}
    \label{fig:borlange}
\end{figure}

For the real-world Borlänge network, the large number of links results in an extremely high-dimensional ECP formulation. Consequently, solving the full-scale conic program directly can be computationally demanding, requiring substantial memory resources and often leading to numerical instability. To mitigate these issues, we apply the network trimming procedure introduced in Section~\ref{sec:network-trimming} to reduce the network size before full estimation. The main objective of this preprocessing step is to obtain a smaller yet representative subnetwork in which the ECP can be efficiently solved. By estimating the RL model on the trimmed network, we aim to recover a set of parameter estimates that closely approximate those of the full-scale model. These estimates serve as high-quality initial values for the NFXP algorithm, which can then be applied to the complete network. Starting from parameters near the true optimum significantly enhances the numerical stability of NFXP and reduces the likelihood of divergence in the value function iteration. In essence, this two-stage estimation strategy leverages the robustness and convexity properties of ECP to provide a reliable initialization for NFXP, thereby combining the strengths of both methods: ECP ensures stable convergence in reduced networks, while NFXP refines the estimates to achieve final accuracy on the full network.



\begin{table}[htbp]
    \centering
\begin{tabular}{l|lllll|l}
\hline
\multirow{2}{*}{Model} & \multicolumn{4}{l}{Attributes}                                                    & \multirow{2}{*}{$\mathcal{L}(\hat{\beta})/N$}     & \multirow{2}{*}{Time(s)} \\ \cline{2-5}
                       & $\hat{\beta}_{TT}$ & $\hat{\beta}_{LC}$ & $\hat{\beta}_{LT}$ & $\hat{\beta}_{UT}$ &                                                 \\ \hline
NFXP                     & -1.93              & -6.10              & -2.12              & -1.53              & 10.89     & 46.05                                          \\ \hline
ECP                 & -1.65              & -0.48              & -0.89              & -3.81              & -2.19       & 362.83                                            \\ \hline
\end{tabular}
    \caption{Estimation results on the trimmed Borlange networks.}
    \label{tab:trimmed}
\end{table}

We apply the network trimming procedure described in Section~\ref{sec:network-trimming}, where the threshold parameter $\epsilon$ is chosen such that approximately $90\%$ of the links are removed from the original network. This produces one subnetwork for each origin--destination pair, with a size roughly one-tenth that of the full network. We then estimate the RL model on each subnetwork using both the NFXP and ECP methods. \textit{To ensure a fair comparison, we deliberately exclude stabilization heuristics commonly employed in prior RL studies (e.g., manually replacing invalid values with finite approximations during value function computation).} 
The estimation results obtained from both methods on the trimmed networks are summarized in Table~\ref{tab:trimmed}. In terms of computation time, NFXP demonstrates a clear advantage, requiring only $46$ seconds compared to $363$ seconds for the convex ECP approach. However, NFXP suffers from numerical instability, as indicated by the positive log-likelihood values, while the ECP estimates remain numerically stable and yield parameter values consistent with those reported in the original study by~\citet{Fosgerau2013}. These findings are consistent with our earlier observations from synthetic experiments on undirected graphs, further confirming the robustness of ECP under challenging numerical conditions.




\begin{table}[htbp]
    \centering
\begin{tabular}{l|lllll}
\hline
\multirow{2}{*}{Methods} & \multicolumn{4}{l}{Attributes}                                                    & \multirow{2}{*}{$\mathcal{L}(\hat{\beta})/N$} \\ \cline{2-5}
                       & $\hat{\beta}_{TT}$ & $\hat{\beta}_{LC}$ & $\hat{\beta}_{LT}$ & $\hat{\beta}_{UT}$ &                                                 \\ \hline
                       NFXP on full network                  & \_              & \_              & \_              & \_              & \_                                          \\ \hline
NFXP on trimmed networks + NFXP                     & -1.59              & -0.94              & -1.35              & -1.58              & -4.37                                            \\ \hline
ECP on trimmed networks + NFXP                  & -2.49              & -0.41              & -0.93              & -4.46              & -3.44                                            \\ \hline
\end{tabular}
    \caption{Estimation results on full networks with different estimation approaches.}
    \label{tab:full}
\end{table}

Table~\ref{tab:full} summarizes the estimation results of the RL model on the full Borlänge network using three different estimation strategies, namely: 
(i) direct estimation using NFXP on the full network, 
(ii) NFXP applied to the full network with initialization from parameters estimated on trimmed subnetworks using NFXP, and 
(iii) NFXP applied to the full network with initialization from parameters estimated on trimmed subnetworks using the ECP approach. 
For the first approach, the initial parameter vector is set to $-1.5$ for all coefficients $\beta$, following the setup used in~\citet{FosgFrejKarl13}.
 As shown in the table, direct estimation via NFXP on the full network fails to converge, as indicated by the missing parameter values. This outcome is expected given the extremely large number of links and the inherent instability of the nested fixed-point approach when applied to high-dimensional, cyclic networks. The fixed-point iterations required by NFXP often diverge or return infeasible value functions, leading to numerical failure.

In contrast, the two-stage approaches—where model estimation is first performed on trimmed subnetworks—yield stable results. When NFXP is initialized with parameters estimated from trimmed networks (second row), the algorithm successfully converges on the full network, producing reasonable parameter estimates and a normalized log-likelihood of $-4.37$. The signs and magnitudes of the estimated coefficients are consistent with behavioral expectations: all parameters are negative, reflecting aversion to travel time, turning movements, and U-turns, with travel time ($\hat{\beta}_{TT}$) and U-turns ($\hat{\beta}_{UT}$) having the strongest influence.

The hybrid approach, where ECP is first applied on trimmed subnetworks before refinement with NFXP on the full network, achieves the best balance between numerical stability and model fit. The resulting coefficients remain behaviorally plausible and closer in magnitude to those reported in previous literature~\citep{Fosgerau2013}, while the corresponding log-likelihood ($-3.44$) indicates a superior fit compared to the purely NFXP-based initialization. This improvement demonstrates that ECP provides more reliable and well-conditioned initial estimates, enabling NFXP to converge more efficiently to a high-quality local optimum.

Overall, these results reinforce the advantages of using ECP as a preprocessing step for large-scale RL model estimation. Direct NFXP estimation remains impractical for complex real-world networks, but the combination of ECP-based initialization and subsequent NFXP refinement offers a computationally feasible and statistically robust alternative for achieving accurate model parameters.

Here we note that the estimation of the NRL model is not included in the present experiments. This omission is primarily due to methodological and computational considerations. First, the proposed ECP framework is not directly applicable to NRL under the general setting, as the introduction of nested structures breaks the global convexity of the underlying value function formulation. In particular, the equilibrium constraints that make ECP tractable for standard RL rely on a single-level, additive utility structure, which no longer holds in the presence of hierarchical or correlated error components inherent to NRL. Second, the classical NFXP estimation approach for NRL suffers from the same numerical and convergence issues as those encountered in the standard RL model, such as instability in the value function computation and divergence during the inner fixed-point iteration. In practice, researchers often use the simpler RL model—estimated via NFXP—to obtain a reasonable set of initial parameter values before proceeding with NRL estimation on the same network~\citep{MaiFrejinger2022,MaiFosFre15}. Therefore, including NRL results here would not yield additional insights beyond those already demonstrated for RL, as both methods share similar computational challenges while the ECP framework, by design, cannot be extended to NRL in a straightforward manner. 

\section{Conclusion and Future Work}\label{sec:concl}
This paper proposed an equilibrium-constrained optimization (ECP) framework for estimating RL models, offering a convex and numerically stable alternative to the classical NFXP approach. Theoretical analysis established the equivalence between ECP and NFXP, while experiments on synthetic and real transportation networks demonstrated that ECP achieves similar log-likelihood performance with improved numerical stability and scalability. Empirical results revealed that ECP performs particularly well on directed acyclic networks, where it maintains stable convergence and near-constant scalability as parameter dimensionality grows. In contrast, NFXP remains faster in small-scale problems but exhibits instability and divergence in complex or cyclic networks. When used together, ECP provides reliable initialization that significantly enhances NFXP convergence on large-scale networks.

Future research will focus on extending the ECP formulation to handle the NRL model and other hierarchical structures \citep{Mai_RNMEV}, as well as developing decomposition-based or distributed algorithms to further enhance scalability. In addition, applying the ECP framework beyond transportation—such as to facility location and assortment optimization problems \citep{mai2020multicut,Talluri2004revenue}—represents a promising direction for expanding its applicability.


\bibliographystyle{plainnat_custom}
\bibliography{refs}

\pagebreak

\appendix

\clearpage 

\begin{center}
    {\Huge Appendix}
\end{center}

\section{Extensions to NRL Model}
In this section, we discuss how our approach can be extended to the nested RL model \citep{MaiFosFre15}, which was introduced to account for correlations between path utilities in the network.
The nested RL (NRL) model \citep{MaiFosFre15} is an extension of the standard RL framework designed to capture correlation structures among alternatives in the network. While the RL model assumes that the i.i.d.\ type-I extreme value errors imply independence of irrelevant alternatives (IIA) across all link or node choices, this assumption may be unrealistic in networks where certain paths share common subroutes. The NRL model relaxes the IIA limitation by allowing the scale parameters $\mu$ to vary across states, thereby capturing heterogeneous substitution patterns within different parts of the network. This added flexibility makes the NRL a more behaviorally consistent representation of route choice, particularly in settings with substantial path overlap, and empirical studies have shown that it often provides a better fit than the standard RL model. However, this flexibility comes at a higher computational cost: unlike the standard RL model, where the value function can be obtained by solving a linear system, the NRL requires iterative procedures (e.g., value iteration) to compute the value function, making estimation more demanding in practice.

\paragraph{Model Formulation:} The NRL model generalizes the standard RL framework by allowing the scale parameter of the extreme value shocks to vary across states. Specifically, let $\mu_s > 0$ denote the scale parameter associated with state $s$. The random utility of transitioning from state $s$ to $s'$ is
$u(s'|s) \;=\; v(s'|s;\bbt) + \mu_s \epsilon(s'),$
where $\epsilon(s')$ are i.i.d.\ type-I extreme value errors. Under this specification, the Bellman recursion becomes
\begin{equation}
    V(s) =
    \begin{cases}
        0, & s = d, \\[0.4em]
        \mu_s \log\!\Big(\sum_{s' \in A(s)} 
        \exp\!\Big(\tfrac{1}{\mu_s}\big(v(s'|s;\bbt)+ V(s')\big)\Big)\Big), & s \in \cS.
    \end{cases}
    \label{eq:nrl-bellman}
\end{equation}
At each state $s$, the conditional choice probability of moving to $s'$ is
\[
P(s'|s;\bbt) \;=\;
\frac{\exp\!\big(\tfrac{1}{\mu_s}(v(s'|s;\bbt) + V(s'))\big)}
{\sum_{a \in A(s)} \exp\!\big(\tfrac{1}{\mu_s}(v(a|s;\bbt) + V(a))\big)}.
\]
For an observed path $\sigma_n=(s_0^n,\ldots,s_{T_n}^n=d)$, the log-likelihood is
\begin{align}
\log P(\sigma_n \mid \bbt)
&= \sum_{t=0}^{T_n-1} \Bigg\{ 
\frac{1}{\mu_{s_t^n}} \big( v(s_{t+1}^n|s_t^n;\bbt) + V(s_{t+1}^n) \big)
- \frac{1}{\mu_{s_t^n}} V(s_t^n) \Bigg\}.
\label{eq:nrl-loglik}
\end{align}
Aggregating over all observed paths yields the NRL maximum likelihood estimator.

\paragraph{Equilibrium-constrained reformulation.}
Analogous to the RL case, the NRL estimation can be posed as an optimization with equilibrium constraints:
\begin{align}
\max_{\bbt,\bV,\bmu}\quad
& \sum_{n=1}^N \sum_{t=0}^{T_n-1} 
\left\{\frac{1}{\mu_{s_t^n}} \big(v(s_{t+1}^n|s_t^n;\bbt) + V^n_{s_{t+1}^n}\big) 
- \frac{1}{\mu_{s_t^n}} V^n_{s_t^n} \right\}
\tag{\sf NRL-MLE-EC}\label{prob:nrl-mle-ec}\\
\text{s.t.}\quad
& V^n_s \;=\; \mu_s \log\!\Big(\sum_{a\in A(s)} 
\exp\!\Big(\tfrac{1}{\mu_s}\big(v(a|s;\bbt)+ V^n_a\big)\Big)\Big),
&& \forall n,~ s\in\cS, \nonumber\\
& V^n_d = 0, && \forall n.\nonumber
\end{align}
Following our approach for the standard RL model, the next step is to relax the equalities in \eqref{prob:nrl-mle-ec} to inequalities. The following proposition shows that this relaxation is \emph{exact} only under a structural condition on the state-dependent scales.

\begin{proposition}\label{prop:nrl-ineq}
Consider the inequality relaxation of \eqref{prob:nrl-mle-ec}:
\begin{align}
\max_{\bbt,\bV,\bmu}\quad
& \sum_{n=1}^N \sum_{t=0}^{T_n-1} 
\left\{\frac{1}{\mu_{s_t^n}} \big(v(s_{t+1}^n|s_t^n;\bbt) + V^n_{s_{t+1}^n}\big) 
- \frac{1}{\mu_{s_t^n}} V^n_{s_t^n} \right\}
\tag{\sf NRL-MLE-EC-IEQ}\label{prob:nrl-mle-ec-ieq}\\
\text{s.t.}\quad
& V^n_s \;\geq\; \mu_s \log\!\Big(\sum_{a\in A(s)} 
\exp\!\Big(\tfrac{1}{\mu_s}\big(v(a|s;\bbt)+ V^n_a\big)\Big)\Big),
&& \forall n,~ s\in\cS, \nonumber\\
& V^n_d = 0, && \forall n.\nonumber
\end{align}
Suppose Assumption~\ref{assump:A1} (reachability/connectivity) and Assumption~\ref{assump:A2} (existence of a feasible solution to \eqref{prob:nrl-mle-ec}) hold, and moreover the scale parameters satisfy the forward–monotonicity condition
\[
\mu_{s} \;\ge\; \mu_{s'}\qquad \text{for every arc } (s,s') \in E \text{ (i.e., } s'\in A(s)\text{)}.
\tag{$\star$}\label{eq:mu-monotone}
\]
Then the relaxation \eqref{prob:nrl-mle-ec-ieq} is \emph{equivalent} to \eqref{prob:nrl-mle-ec}: at every optimal solution of \eqref{prob:nrl-mle-ec-ieq}, all inequalities bind and the Bellman equalities hold for every state.
\end{proposition}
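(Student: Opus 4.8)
The plan is to mirror the proof of Theorem~\ref{th:ineq-relax} step by step, adapting each argument to the state-dependent scales $\{\mu_s\}$. First I would record the trivial inclusion: any feasible point of \eqref{prob:nrl-mle-ec} is feasible for \eqref{prob:nrl-mle-ec-ieq}, so the relaxation can only increase the optimal value, and the task is to rule out a strict increase. I then argue by contradiction: supposing $(\widehat\bbt,\widehat\bV,\widehat\bmu)$ is optimal for \eqref{prob:nrl-mle-ec-ieq} with strict slack $\widehat V^n_{\bar s} > (\cT[\widehat\bV^n])_{\bar s}$ at some observation $n$ and state $\bar s$, I tighten $\widehat\bV^n$ by iterating the NRL Bellman operator and show the objective strictly improves, contradicting optimality.

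The second ingredient is the NRL analogue of Lemma~\ref{lm:monotone}. Here I would set $G_s(\mathbf z)=\mu_s\log\!\big(\sum_{a\in A(s)}\exp(z_a/\mu_s)\big)$ with $z_a=v(a|s;\bbt)+V_a$, and compute $\partial G_s/\partial z_a = \exp(z_a/\mu_s)/\sum_b\exp(z_b/\mu_s)\in(0,1)$. Thus $G_s$ is strictly increasing in every argument for any fixed $\mu_s>0$, which yields both monotonicity of $\cT$ in $\bV$ and strict backward propagation of slack to every predecessor, exactly as in the RL case and, importantly, without invoking \eqref{eq:mu-monotone}. Consequently the tightening sequence $\widehat\bV^{n,t+1}=\cT[\widehat\bV^{n,t}]$ is monotonically decreasing, remains feasible for \eqref{prob:nrl-mle-ec-ieq}, and, using Assumption~\ref{assump:A1} so that $s^n_0$ reaches $\bar s$, eventually strictly decreases the origin value $\widehat V^{n}_{s^n_0}$.

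The crux, and the only place where \eqref{eq:mu-monotone} is needed, is the objective-improvement step, which differs essentially from the RL case because the NRL log-likelihood does not telescope into $-V_{s^n_0}$ when the scales vary across states. I would expand the value-dependent part of the objective for observation $n$ along the path $s^n_0\to\cdots\to s^n_{T_n}=d$ and collect the coefficient of each $V^n_{s^n_j}$, obtaining
\[
-\frac{1}{\mu_{s^n_0}}V^n_{s^n_0} \;+\; \sum_{j=1}^{T_n-1}\Big(\tfrac{1}{\mu_{s^n_{j-1}}}-\tfrac{1}{\mu_{s^n_j}}\Big)V^n_{s^n_j},
\]
after using $V^n_d=0$. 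Under \eqref{eq:mu-monotone} each arc satisfies $\mu_{s^n_{j-1}}\ge\mu_{s^n_j}$, so every intermediate coefficient $\tfrac{1}{\mu_{s^n_{j-1}}}-\tfrac{1}{\mu_{s^n_j}}$ is non-positive while the origin coefficient $-1/\mu_{s^n_0}$ is strictly negative. Since tightening weakly decreases all path values and strictly decreases $V^n_{s^n_0}$, every term moves the objective up or leaves it unchanged, with the origin term giving a strict increase; this contradicts optimality and forces all inequalities to bind.

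The main obstacle is precisely this coefficient bookkeeping: without \eqref{eq:mu-monotone} an intermediate coefficient $\tfrac{1}{\mu_{s^n_{j-1}}}-\tfrac{1}{\mu_{s^n_j}}$ can be strictly positive, so pulling the corresponding value down toward equilibrium would \emph{lower} the objective and could offset the gain at the origin, breaking the contradiction. This explains why forward-monotonicity is not merely a technical convenience but is genuinely required for exactness. I would close by observing that \eqref{eq:mu-monotone} is exactly the structural hypothesis rendering the entire value-dependent objective non-increasing in the tightening direction, after which the remaining reasoning follows the four steps of Theorem~\ref{th:ineq-relax} verbatim.
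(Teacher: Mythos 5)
Your proposal is correct and follows essentially the same route as the paper's proof: both rest on the coefficient bookkeeping showing that under \eqref{eq:mu-monotone} each $V^n_s$ enters the objective with a non-positive coefficient (strictly negative at the origin), combined with the monotone-tightening argument inherited from Theorem~\ref{th:ineq-relax}. Your explicit telescoped form of the coefficients and your observation that monotonicity of the NRL Bellman operator holds without \eqref{eq:mu-monotone} (so that the condition is needed only for the objective-improvement step) match the paper's reasoning, merely stated a bit more explicitly.
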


\begin{proof}[Proof (brief)]
Fix a path $\sigma_n=(s_0^n,\ldots,s_{T_n}^n=d)$. Collect the coefficients of each $V^n_s$ in the objective of \eqref{prob:nrl-mle-ec-ieq}. Each time $s$ appears as a \emph{current} state $s_t^n$ it contributes $-\tfrac{1}{\mu_s}$; each time $s$ appears as a \emph{next} state $s_{t+1}^n$ from some predecessor $p=s_t^n$ it contributes $+\tfrac{1}{\mu_p}$. For interior visits, every entry into $s$ (as a next state) is paired with a departure from $s$ (as a current state). Under \eqref{eq:mu-monotone}, for every incoming arc $(p,s)$ we have $\tfrac{1}{\mu_p}\le \tfrac{1}{\mu_s}$. Hence, for each interior visit, the net coefficient on $V^n_s$ is
\[
+\tfrac{1}{\mu_p} - \tfrac{1}{\mu_s} \;\le\; 0,
\]
with strict negativity whenever $\mu_p>\mu_s$. Aggregating over all visits shows that, for every $s\in\cS$, the total coefficient of $V^n_s$ in the objective is \emph{non-positive} for internal states, and negative for the original states.

Since the objective function is non-increasing in every $V^n_s$ and strictly decreasing in the origin value $V^n_{s^n_0}$, we can argue as in the RL case that any slack in an inequality constraint at state $s$ can be eliminated. In particular, if $V^n_s$ is strictly larger than the right-hand side of its constraint, then decreasing $V^n_s$ improves the objective, while feasibility is preserved thanks to the monotonicity of the log-sum-exp operator. Moreover, by the connectivity condition (Assumption~\ref{assump:A1}) and the standard monotonicity of the Bellman operator, lowering $V^n_s$ does not force any predecessor’s constraint to become violated. Hence, we may continue decreasing $V^n_s$ until its corresponding constraint becomes active. 
Observe that if $V^n_s$ is strictly decreased, then the value at a predecessor state $s'$ 
can also be strictly decreased. By applying this argument iteratively and invoking the 
connectivity assumption (Assumption~\ref{assump:A1}), it follows that the value at the 
initial state $V^{n}_{s^n_0}$ will also be strictly decreased. Consequently, at the optimum, 
all inequalities must bind, ensuring that the Bellman equations are satisfied with equality 
throughout the network.
 Finally, Assumption~\ref{assump:A2} guarantees the existence of a feasible fixed point, ensuring that the process converges to a valid optimal solution of \eqref{prob:nrl-mle-ec}, thereby establishing the claimed equivalence.
\end{proof}

\begin{remark}[Implications of Condition \eqref{eq:mu-monotone}]
    Condition \eqref{eq:mu-monotone} is a \emph{sufficient} (not necessary) requirement ensuring the objective is (componentwise) non-increasing in $\bV$ and therefore forces epigraph inequalities to bind. When \eqref{eq:mu-monotone} fails, some $V^n_s$ can enter the objective with a \emph{positive} net coefficient (due to predecessors with smaller $\mu$), so slack need not be eliminated at optimality, and the inequality relaxation may strictly enlarge the feasible/optimal set; in that case, the exact convex/exponential-cone reformulation used for RL does not carry over to NRL.

We further note that in the case of a network containing cycles, condition~\eqref{eq:mu-monotone} implies that the scale parameters $\mu_s$ must be identical across all states belonging to a cycle. This is because the monotonicity condition requires $\mu_{s'} \geq \mu_s$ for every successor $s' \in A(s)$. Traversing along the edges of a cycle $s_1 \to s_2 \to \cdots \to s_k \to s_1$ therefore yields the chain of inequalities
$\mu_{s_2} \;\geq\; \mu_{s_1},\quad
\mu_{s_3} \;\geq\; \mu_{s_2}, \;\;\ldots,\quad
\mu_{s_1} \;\geq\; \mu_{s_k}.$
Combining these inequalities forces equality throughout, i.e.,
$\mu_{s_1} = \mu_{s_2} = \cdots = \mu_{s_k}.$
Hence, within any cycle, the scale parameter must remain the same. This result highlights that heterogeneity of $\mu_s$ can only occur across acyclic portions of the network, while cycles impose a strict homogeneity constraint on the scale parameters.

The assumption that $\mu_{s} \geq \mu_{s'}$ for every successor $s' \in A(s)$ is not part of the standard definition of the NRL model, but it can be justified from a behavioral perspective. Recall that the parameter $\mu_s$ controls the scale of the random utility terms: a larger $\mu_s$ indicates greater influence of the random shocks relative to the deterministic utilities, and thus a higher degree of randomness in the choice at state $s$. By requiring that $\mu_{s} \geq \mu_{s'}$ for every successor $s'$, we are essentially assuming that the traveler is more uncertain about decisions made at earlier stages of the trip, while becoming increasingly certain as they approach the destination. This monotonic decrease in uncertainty along the journey is behaviorally intuitive, since route choices tend to become more predictable and focused as the traveler gets closer to completing the trip.
\end{remark}
\begin{remark}[Convexity, trade-offs, and practical strategies in NRL estimation]
We can draw several important implications from the above discussion.  
First, the equilibrium-constrained problem in \eqref{prob:nrl-mle-ec-ieq} is not convex in the joint variables $(\bbt,\bV,\bmu)$. Convexity is preserved only when the scale parameters $\bmu$ are fixed, in which case the problem reduces to a structure similar to the standard RL model. This observation explains why the NRL model is substantially more challenging to estimate.  

 Second, although our main result for the NRL model is negative—namely, that the MLE problem cannot in general be reformulated as a convex optimization problem—this insight highlights the fundamental difficulty of solving the NRL compared to the RL. The RL benefits from convexity and admits an exact exponential-cone reformulation, while the NRL sacrifices this tractability to achieve greater behavioral flexibility.  

 Third, this exposes a fundamental trade-off: the NRL relaxes the IIA property and provides a richer behavioral representation by accounting for correlation among overlapping routes, but this comes at the cost of losing the convex structure that underpins efficient conic optimization methods. In other words, the gain in behavioral realism entails a loss of computational convenience.  

Finally, our convex reformulation for the RL model remains practically useful in the context of the NRL. Specifically, one can first estimate the RL counterpart (where all scale parameters are assumed equal) efficiently using the exponential-cone reformulation, and then use the resulting parameter estimates as high-quality initial values to warm-start the NFXP algorithm for the NRL model. This hybrid strategy can mitigate the sensitivity and instability of direct NFXP estimation in the NRL setting, and hence represents a pragmatic way to leverage our approach even when convexity is not available.  
\end{remark}

Beyond the standard RL and NRL formulations, there exists another variant known as the \emph{Recursive Network Multivariate Extreme Value} (RN-MEV) model~\citep{Mai_RNMEV}, which introduces an even more flexible correlation structure among alternatives. However, the RN-MEV model shares a similar recursive structure with the NRL and can, in fact, be reformulated as an equivalent NRL model defined over an expanded or extended network. Consequently, it inherits the same limitations when applying the proposed ECP approach: the loss of global convexity in the value function and the inability to express the estimation problem as a tractable conic program. Addressing these challenges and extending the ECP framework to accommodate correlated network structures such as RN-MEV remains an open avenue for future research.

\end{document}

\section{Mixed RL}

\mtien{The following is not correct ...}

In mixed logit models, the parameters $\bbt$ are assumed to be random, following a given distribution. The estimation can be formulated as follows:
\begin{align}
    \max_{\bbt}~~&~~ \bbE_{\bbt}\left[\sum_{n\in [N]} \ln P(\sigma_n|~\bbt)\right]
\end{align}
which can be estimated by MC sampling through T samples of $\bbt$ as 
\begin{align}
    \max_{\bbt}~~&~~ \frac{1}{T}\sum_{t\in [T]}\sum_{n\in [N]} P(\sigma_n|~\bbt^t)
\end{align}
 Using the same technique presented above, we can formulated the above MLE as the following exp-cone program:
\begin{align}
    \max_{\beta}~~&~~\frac{1}{T}\sum_{t\in[T]}\sum_{n\in [N]} v(\sigma_n|\bbt^t) -  \log\left(\sum_{a\in A(k)} \exp(Q^{nt}_{ka})\right)\label{mle-ec-1}\tag{\sf MLE-EC-1}\\
    \text{s.t.}~~&~~ Q^{nt}_{ka} \geq v(a|k,\bbt^t) + \log\left(\sum_{a\in A(k)} e^{Q^{nt}_{ka}}\right),~~\forall t\in [T],~ n\in[N],~ k\in \cN 
\end{align}

 A random utility function can be specified  as follows. 
   \[
   v(a|k,\bbt) = A(a|k)^\transpose  \bbt^0 +  A'(a|K) \times \sigma \times \xi
   \]
   where $\bbt^0$ are the means of random vector $\bbt$, $\sigma$ represents a covariance, $\xi$ follows $N(0,1)$, $A(a|k)$ is the vector of attributes, and $A'(a|k)$ is an attribute that would be of the most importance. $\bbt^0$ and $\sigma$ are parameters to be estimated. 

There are several ways to specify a mixed logit model. We can select one specification and conduct preliminary testing. The goal is to demonstrate that (i) our exponential cone (exp-cone) reformulation is faster than the traditional matrix-based (or nested fixed-point) approaches, and (ii) the mixed logit model performs better than the standard logit model.

Sampling can be performed by generating \(T\) samples from the standard normal distribution, \(\{\xi_1, \ldots, \xi_T\}\). The maximum likelihood estimation (MLE) formulation can then be expressed as follows:
\begin{align}
    \max_{\bbt^0, \sigma}~~ & ~~ \frac{1}{T} \sum_{t \in [T]} \sum_{n \in [N]} v(\sigma_n | \bbt^t) - \log\left(\sum_{a \in A(k)} \exp(Q^{nt}_{ka})\right) \tag{\sf MLE-EC-1} \label{mle-ec-1} \\
    \text{s.t.}~~ & ~~ Q^{nt}_{ka} \geq v(a | k, \bbt^t) + \log\left(\sum_{a \in A(k)} e^{Q^{nt}_{ka}}\right),~~\forall n \in [N],~ k \in \cN,~ t \in [T], \\
    & ~~ v(a | k, \bbt^t) = A(a | k)^\top \bbt^0 + A'(a | k) \times \sigma \times \xi^t.
\end{align}

\end{document}